\def\mod{\text{ mod }}
\def\m{{\frak m}}
\newcounter{NN}
\newtheorem{theorem}[NN]{Theorem}
\newtheorem{corollary}[NN]{Corollary}
\newtheorem{conjecture}[NN]{Conjecture}
\newtheorem{lemma}[NN]{Lemma}
\def\m{{\frak m}}
\begin{document}
\bibliographystyle{plain}
\title{From integrable equations to Laurent recurrences}
\author{Khaled Hamad and Peter H.~van der Kamp}
\date{Department of Mathematics and Statistics\\
La Trobe University, Victoria 3086, Australia\\[5mm]
\today
}

\maketitle

\begin{abstract}
Based on a recursive factorisation technique, first introduced in \cite{vdK}, we show how integrable difference equations give rise to recurrences which possess the Laurent property. We derive non-autonomous Somos-$k$ sequences, with $k=4,5$, whose coefficients are periodic functions with period 8 for $k=4$, and period 7 for $k=5$, and which possess the Laurent property. We also apply our method to the DTKQ-$N$ equation \cite{DTKQ}, with $N=2,3$, and derive Laurent recurrences with $N+2$ terms, of order $N+3$. In the case $N=3$ the recurrence has periodic coefficients with period 8. We demonstrate that recursive factorisation also provides a proof of the Laurent property.
\end{abstract}

\section{Introduction}
A sequence $\{u_n\}_{n=1}^\infty$ defined by $N$ initial values $\{u_n\}_{n=1}^N$ and an $N$th order nonlinear rational recursion,
\begin{equation} \label{nrr}
u_{n+N}=R(u_n,u_{n+1},\ldots,u_{n+N-1}),
\end{equation}
where $R$ is a rational function, is said to have the {\em Laurent property} if, for all $n$, $u_n$ is polynomial in the variables $\{u_n^{\pm 1}\}_{n=1}^N$. The property was first introduced by Hickerson to prove the integrality of a sequence called Somos-6, cf. \cite{Rob}. Indeed, as an immediate consequence of the Laurent property it follows that the sequence obtained by taking $\{u_n=1\}_{n=1}^N$ is an integer sequence, or, a sequence of polynomials if the rational function $R$ depends on additional parameters. For example, with mentioned initial values the (generalized) Somos-4 recurrence
\begin{equation} \label{S4}
\tau_{n+2} \tau_{n-2}=\alpha \tau_{n+1} \tau_{n-1}+\beta \tau_n^2
\end{equation}
provides a sequence of polynomials in two variables $\alpha,\beta$.

Equation (\ref{S4}) was derived (in 1982) by Michael Somos as an addition formula for elliptic functions. It is the prototype Laurent recurrence, and it has many beautiful properties. The sequence of numbers that one gets by taking $\alpha=\beta=1$
is referred to as {\em the} Somos-4 sequence. Its integrality (and of related sequences) was a great mystery initially \cite{Cra,Gal,Mal,Som}. Robinson showed that the $i$-th and $j$-th terms of the Somos-4 sequence are relatively prime whenever $|i-j|\leq4$, and he infered that that for any given $m\in\mathbb N$ the sequence modulo $m$ is periodic \cite{Rob}. Everest et al. \cite{EMW} showed that every term beyond the fourth has a primitive divisor, i.e. a prime which does not divide any preceding term. Kanki et al. \cite{KMMT}
have proven that all terms of Somos-4 are irreducible Laurent polynomials in their initial values and pairwise co-prime, as Laurent polynomials. A seemingly unnoticed divisibility property for the Somos-4 polynomials was recently found by one of the authors \cite{vdK}. A so called near-addition formula has been proven in \cite{Ma}.
Somos-4 is closely connected to an elliptic divisibility sequence (EDS) \cite{Hone5,HS,PooSwa,War}, the theory of which recently found application in cryptography \cite{Shi}, and in generating large primes \cite{EMS}. An explicit solution for $\tau_n$ in terms of the Weierstrass elliptic function can be found in \cite{Hone4,Hone5}. From an integrable systems viewpoint, the Somos-4 recurrence arises as a bilinearisation of the following QRT map \cite{QRT},
\begin{equation} \label{fqrt}
f_{n+1} f_n f_{n-1}=\alpha +\frac{\beta}{f_n},
\end{equation}
through the relation
\begin{equation} \label{ftau}
f_n=\frac{\tau_{n+1}\tau_{n-1}}{\tau_n^2},
\end{equation}
which encodes the singular confinement of the QRT map \cite{HoneLP,HoneSC}. Furthermore, it is a special case of the Hirota-Miwa equation \cite{GR,Mas,WTS}.

A deeper understanding of the Laurent property, for a wide class of recurrences, came with the work of Fomin and Zelevinski \cite{FoZeC,FoZe}, and subsequent development \cite{ACH,FM,LP}. The algebraic combinatorial setting of cluster algebras has had profound impact in diverse areas of mathematics, such as algebraic Lie theory \cite{GLS}, Poisson geometry \cite{GSV}, higher Teichm\"uller theory \cite{FG}, the representation theory of quivers and finite-dimensional algebras \cite{BMRRT}, and integrable systems \cite{FoHo}, cf. the cluster algebra portal \cite{Fom}.

In this paper we describe a method to obtain recurrences which possess the Laurent property, such as (\ref{S4}), from equations that are integrable, such as (\ref{fqrt}), which is different than via a transformation such as (\ref{ftau}). Our method is based on a recursive factorisation technique, which was used in \cite{vdK} to establish polynomial upper bounds on the growth of degrees of rational mappings. Slow growth (= low complexity = vanishing algebraic entropy) is a good indicator (better than singular confinement) of the integrability of a mapping \cite{A,BMR,BV,FV,OTGR,Ves}.

Given a rational recurrence (\ref{nrr}) we homogenise $u_n=a_n/b_n$,
which yields a system of recurrences for polynomial sequences $\{a_n\}_{n=1}^\infty$
and $\{b_n\}_{n=1}^\infty$. Such a system has two ultra-discrete limits; one gives an upper bound on the growth of the degrees of the polynomials $a_n$ and $b_n$, and the other
yields a lower bound on the multiplicities of their divisors. The degree of $u_n$ is
obtained from the degree of $a_n$ (or $b_n$) minus the degree of the greatest common divisor $g_n=gcd(a_n,b_n)$. Thus, one has to control the divisors of $a_n$ and $b_n$. By iterating the
system finitely many times (one has to iterate $N$ times, the order of the recurrence) and using the observed factorisation as initial values in the ultra-discrete
system for multiplicities, one obtains a lower bound on the multiplicities of divisors
\cite{Ham,vdK}. In many cases this lower bound on the multiplicities is sharp. In any case, by recursively defining the next divisor to be the quotient of a term in the sequence after division by the previous divisors, one produces an exact factorisation of the polynomial sequences (although not necessarily into irreducible factors)\footnote{If one is after degree growth one now writes the degree of $a_n$ (or $b_n$) as a convolution of the degrees of the divisors and their multiplicities. Using (the solution to) the degree
recurrence one may then obtain a recursion for the degrees of the divisors and, when all but finitely many
divisors are common, retrieve an upper bound on the growth of degrees of $u_n$ \cite{Ham,vdK}.}.
Substitution of these factorisations, when all but finitely many divisors are common, into
the system of recurrences for our polynomial sequences yields a nonlinear rational recurrence
for the divisors. Clearly, by definition, the divisors are polynomial. Hence one expects the recurrence to
possess the Laurent property.

If one starts with a recurrence (\ref{nrr}) that has the Laurent property, what will happen is
that all divisors but powers of the initial variables, will be common to both $a_n$ and $b_n$ for
all $n$. This then proves the Laurent property.

In section \ref{SQRT} we show how the QRT-map (\ref{fqrt}), via recursive factorisation, gives rise
to a Somos-4 recurrence of the form (\ref{S4}) but where the coefficients are now functions of the initial values of the QRT-map, $\alpha=\alpha_n(f_1,f_2), \beta=\beta_n(f_1,f_2)$, which
satisfy the periodicity conditions $(\alpha,\beta)_{n+p}=(\alpha,\beta)_n$ with $p=8$.
Similarly we show how another QRT-map yields a Somos-5 recurrence where the coefficients are periodic functions with period $p=7$. In section \ref{SSOM} we follow the same procedure starting with the Somos-4,5 sequences themselves. Surprisingly, or not, they give rise to Somos-4 and Somos-5 recurrences with more general periodic coefficients than those obtained in section \ref{SQRT}. Explicitly we have found
\begin{equation} \label{S4P}
c_{n+2}c_{n-2}=\alpha_{n} c_{n+1}c_{n-1} + \beta_{n} c_{n}^{2},
\end{equation}
with coefficients
\begin{equation}
\alpha_{n}=\alpha \prod_{i=1}^{4} \tau_{i}^{p_{n-i \mod 8}}, \qquad 
\beta_{n}=\beta \prod_{i=1}^{4} \tau_{i}^{q_{n-i \mod 8}}.
\end{equation}
where $p=[1,0,0,1,0,0,1,0]$\footnote{We have $p_1=1$.}, $q=[0,0,1,0,1,0,0,2]$, and
\begin{equation} \label{S5P}
d_{n+3}d_{n-2}=\gamma_{n} d_{n+2} d_{n-1}+\delta_{n} d_{n} d_{n+1},
\end{equation}
with coefficients
\begin{equation}
\gamma_{n}=\gamma \prod_{i=1}^{5} \sigma_{i}^{r_{n-i \mod 7}}, \qquad 
\delta_{n}=\delta \prod_{i=1}^{5} \sigma_{i}^{s_{n-i \mod 7}}.
\end{equation}
where $r=[1,0,0,0,1,0,0]$, $s=[0,0,1,0,0,1,1]$. Both equations (\ref{S4P},\ref{S5P}) are
special cases of the Hirota-Miwa equation
\begin{equation} \label{HWE}
h_nh_{n+w}=\alpha_nf_{n+v_1}f_{h+u_1}+\beta_nf_{n+v_2}f_{h+u_2}
\end{equation}
whose integrability condition
\begin{equation} \label{HWC}
\alpha_{n}\alpha_{n+w}\beta_{n+v_1}\beta_{n+u_1}
=\alpha_{n+v_2}\alpha_{n+u_2}\beta_{n}\beta_{n+w}
\end{equation}
is equivalent to Laurentness, see \cite{Mas}. Condition (\ref{HWC}) is satisfied for both equations
(\ref{S4P}) and (\ref{S5P}).

In the final section we apply our method to two equations quite unrelated to anything Somos. We consider the first two members of the hierarchy of equations
\begin{equation}
(\sum_{k=0}^{N}u_{n+k})(\prod_{l=1}^{N-1}u_{n+l})=\phi. 
\end{equation}
which was introduced in \cite{DTKQ}, and whose degree growth has been studied in \cite{Ham}. For $N=2$ the map is another QRT-map, which we relate to the fifth order Laurent recurrence
\begin{equation}
e_{n+5}e_{n+2}^{2}e_{n+1}
= \phi e_{n+3}^{2}e_{n+2}^{2}
- e_{n+4}e_{n+3}^{2}e_{n} 
- e_{n+4}^{2}e_{n+1}^{2}.
\end{equation}
For $N=3$ we find that the ultra-discretisation of the homogenised system does not describe
the multiplicities of the second divisor. Using a number of primes as initial values enables us to
iterate the system sufficiently many times and so to formulate a conjecture for these multiplicities.
Via recursive factorisation we arrive at the following sixth order periodic Laurent recurrence,
\begin{align*}
\epsilon_{n+3}c_{n-1}c_{n-2}c_{n}^{2}c_{n+3}
=&\frac{\alpha}{\epsilon_{n+1}\epsilon_{n+2}}c_{n-1}c_{n}^{3}c_{n+1}
-\epsilon_{n}c_{n+2}c_{n-3}c_{n+1}c_{n}^2 \\
&-\epsilon_{n+1}c_{n-2}^{2}c_{n+2}c_{n+1}^{2}
-\epsilon_{n+2}c_{n-2}c_{n-1}^2c_{n+2}^2,
\end{align*}
with
$\epsilon_n=u_{2}^{\zeta_{n \mod 8}}$ and $\zeta=[0,1,0,-1,-1,2,-1,-1]$.

If one starts with an integrable equation and obtains via recursive factorisation a Laurent
recurrence for the divisors of the numerators and denominators, the divisors will depend on
both the parameters and the initial values of the integrable equation. For the periodic Somos
sequences this dependence is realised in the coefficients from the Laurent recurrence, and
we can start the recurrence with unit initial values. Thus the polynomiality of the divisors is
completely explained by the Laurentness of the recurrence. For the recurrences we have obtained
for the DTKQ equations this is not the case. Here we have to initialise the recurrences with
initial values that depend in a specific way on the initial values of the DTKQ equation. Therefore
in these cases the Laurentness of the recurrences is not enough to explain the polynomiality
of the divisors. One would need a strong Laurent property such as given for Somos-4,5 in \cite{HS}.
This issue is left open for future research.

\section{From QRT maps to Somos-4 \& 5 recurrences with periodic coefficients} \label{SQRT}
In this section we show how by homogenisation, an ultra-discrete limit and recursive factorisation the QRT-map (\ref{fqrt}) leads to a special case of
periodic Somos-4, equation (\ref{S4P}). A similar result for Somos-5 is also given.

\subsection{To periodic Somos-4}
We substitute $f_{n}=a_{n}/b_{n}$ in (\ref{fqrt}). This gives
\[
\frac{a_{n+1}}{b_{n+1}}
=
\frac{w_{n+1}b_{n}b_{n-1}}{a_{n-1}a_{n}^{2}},
\]
with $w_{n+1}:=\alpha a_{n}+\beta b_{n}$, from which we obtain
a system of recurrences for polynomial sequences $\{a_n\}$ and $\{b_n\}$:
\begin{align}
a_{n+1}&=w_{n+1}b_{n}b_{n-1} \label{map7}  \\
b_{n+1}&={a_{n-1}a_{n}^{2}} \label{map8},
\end{align}
which we supplement with initial values $a_{1}=f_{1}$, $a_{2}=f_{2}$, $b_{1}=b_{2}=1$.
Iterating (\ref{map7}) and (\ref{map8}) three more times give us:
\begin{align*}
a_{n+2} = a_{n-1}a_{n}^2b_{n} r_1,
\ &b_{n+2}=a_{n}b_{n-1}^2b_{n}^2 w_{n+1}^2 , \\
a_{n+3} = a_{n-1}a_{n}^4b_{n-1}^2b_{n}^3r_2 w_{n+1}^2 ,
\ &b_{n+3} = a_{n-1}^2a_{n}^4 b_{n-1} b_{n}^3 w_{n+1}r_{1}^2, \\
a_{n+4} = a_{n-1}^3a_{n}^9b_{n-1}^4b_{n}^8  r_{2}^2r_3w_{n+1}^4,
\ &b_{n+4} =  a_{n-1}^3 a_{n}^{10}b_{n-1}^4b_{n}^7 r_{2}^{2}w_{n+1}^{4},
\end{align*}
where $\{r_i\}_{i=1}^{3}$, are irreducible polynomials in $a_{n-1}$, $b_{n-1}$, $a_n$, $b_n$, $\alpha$ and $\beta$. We observe the following factorisation properties: $w_{n+1}$ does not divide $a_{n+2}$, it divides $b_{n+2}$ and $a_{n+3}$ with multiplicity 2, it divides $b_{n+3}$ with multiplicity 1, and $w_{n+1}$ is a divisor of both $a_{n+4}$ and $b_{n+4}$ with multiplicity 4. Furthermore, from (\ref{map7}) and (\ref{map8}), we find the following ultra-discrete system of recurrences for multiplicities:
\begin{align*}
 m_{n+2}^{a} &\geq min(m_{n+1}^{a} +m_{n}^{b} + m_{n+1}^{b}; 2 m_{n+1}^{b} +m_{n}^{b}),\\
m_{n+2}^{b} &=m_{n}^{a} +2m_{n+ 1}^{a}.
\end{align*} where $m_{i}^{p}(f)$ denotes the multiplicity of a polynomial $f$ in polynomial $p_{i}$ and we have suppressed the dependence on $f$. 
Using the equal sign in the first equation, we will get a lower bound for the multiplicities, which we denote using Euler's frakture typesetting. Thus, we will employ the following system:
\begin{align}\label{map9}
\m_{n+2}^{a} &= min(\m_{n+1}^{a} +\m_{n}^{b} + \m_{n+1}^{b}; 2 \m_{n+1}^{b} +\m_{n}^{b}),\\
\m_{n+2}^{b} &=\m_{n}^{a} +2\m_{n+ 1}^{a}.\label{map10}
\end{align}
To get a lower bound for the multiplicity of $w_{k}$ $(k>2)$ in the sequences $\{a_{n}\}$ and $\{b_{n}\}$, we solve (\ref{map9}) and (\ref{map10}) with the following initial values: $\m_{k+1}^{a}=0$, $\m_{k+1}^{b}=2$, $\m_{k+2}^{a}=2$, $\m_{k+2}^{b}=1$ and $\m_{k+3}^{a}=\m_{k+3}^{b}=4$. We find, for $n \geqslant k+3$, that $\m_{n}^{a}(w_{k})= \m_{n}^{b}(w_{k})=\m_{n-k}$, where
\[
\m_{1}=0,\ \m_{2}=2,\ \m_{n+1}=2\m_{n}+\m_{n-1}.
\]
This can be seen by taking   $ \m_{k}^{a}=\m_{k}^{b}$ in the right hand sides of (\ref{map9}) and (\ref{map10}). One finds equality and hence $\m_{n+2}^{a}=\m_{n+2}^{b}$. We define sequences $\{\m_{n}^{a}(c_{i})\}_{n=1}^{\infty}$ and $\{\m_{n}^{b}(c_{i})\}_{n=1}^{\infty}$, for $i\in\{1,2\}$, by (\ref{map9}) and (\ref{map10}) and initial values $\m_{j}^{a}(c_{i})=\delta_{ij}$ and $\m_{j}^{b}(c_{i})=0$. 

The polynomials  $a_{n}$ and $b_{n}$ can be expressed  in terms of a sequence $\{c_{k}\}_{n=1}^\infty$, of polynomials in $a_1$, $a_2$, $\alpha$ and
$\beta$. Each polynomial $c_{n}$ is defined as the quotient of $a_{n}$ after division by powers of $c_{i < n}$, as follows, 
\begin{equation} \label{map14}
a_{n}=
\left\{
\begin{array}{ll}
c_{n} &\text{ if } n\leqslant 3, \\
c_{1}c_{2}^{2}c_{4} &\text{ if } n=4, \\
c_{1}^{\m_{n}^{a}(c_{1})}c_{2}^{\m_{n}^{a}(c_{2})}(\prod_{i=3}^{n-3}c_{i}^{\m_{n-i}})c_{n-2}^{2}c_{n} &\text{ if } n > 4.
\end{array}
\right.
\end{equation}
It is clear that $c_{n}$ is polynomial because $c_{i}|w_{i}$ for all $i > 4$ and hence $\m_{n}^{a}(c_{i}) \geqslant \m_{n}^{a}(w_{i}) $. 
We know that $\prod_{i=1}^{n}c_{i}^{\m_{n}^{b}(c_{i})}|b_{n}$. Taking $b_{n}$ to be given by
\begin{align}\label{map15}
b_{n}=
\left\{
\begin{array}{ll}
1 &\text{ if } n\leqslant 2, \\
c_{n-2}c_{n-1}^{2} &\text{ if } n \in \{3,4\}, \\
c_{1}^{\m_{n}^{b}(c_{1})}c_{2}^{\m_{n}^{b}(c_{2})}(\prod_{i=3}^{n-3}c_{i}^{\m_{n-i}})c_{n-2}c_{n-1}^{2}, &\text{ if } n>4,
\end{array}
\right.
\end{align} we can verify equation (\ref{maba}) is satisfied. Thus, defining $g_n=\gcd(a_n,b_n)$ to be the greatest common divisor of $a_n$ and $b_n$, we get
\begin{align}\label{map16}
g_{n}=\prod_{i=1}^{n}c_{i}^{\m_{n}^{g}(c_{i})}=c_{1}^{\m_{n}^{g}(c_{1})}c_{2}^{\m_{n}^{g}(c_{2})}(\prod_{i=3}^{n-3}c_{i}^{\m_{n-i}})c_{n-2},
\end{align} where $\m_{n}^{g}(c_{i})=min(\m_{n}^{a}(c_{i}),\m_{n}^{b}(c_{i}))$. Note, from
$\frac{b_{n}}{g_{n}}=c_{1}^{{\m_{n}^{b} (c_{1})} -{\m_{n}^{g}(c_{1})} }  c_{2}^{{\m_{n}^{b}(c_{2})}-{\m_{n}^{g}(c_{2})}}c_{n-1}^{2}$, it can be seen that the map (\ref{fqrt}) does not posses the Laurent property.

We now consider the lower bounds for the multiplicities of $c_{1}$, $c_{2}$ in $a_n$ and $b_n$ and observe the following differences are periodic.
\begin{lemma}
We have:
\[
\m_{k}^{a}(c_{i})-\m_{k}^{b}(c_{i})=
\left\{
\begin{array}{ll}
v_{k \mod  8} &\text{ if } i=1, \\
v_{k-3 \mod  8} &\text{ if } i=2,
\end{array}
\right.
\] where $v=[1,0,-1,1,-1,0,1,-2]$.
\end{lemma}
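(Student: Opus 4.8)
\emph{Proof strategy.} Fix $i\in\{1,2\}$ and abbreviate $d_n:=\m_n^a(c_i)-\m_n^b(c_i)$. The plan is to eliminate the $\min$ in (\ref{map9}) and collapse the coupled piecewise-linear system (\ref{map9})--(\ref{map10}) into a single self-contained recurrence for $d_n$, after which periodicity reduces to a finite iteration. The crucial observation I would make first is that the two arguments of the $\min$ differ by exactly $d_{n+1}$: indeed $(\m_{n+1}^a+\m_n^b+\m_{n+1}^b)-(2\m_{n+1}^b+\m_n^b)=\m_{n+1}^a-\m_{n+1}^b=d_{n+1}$. Hence the first argument is the minimiser precisely when $d_{n+1}\le 0$ and the second when $d_{n+1}\ge 0$, the two coinciding when $d_{n+1}=0$. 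In other words, the active branch of the $\min$ at step $n+2$ is governed solely by the sign of the previously computed difference $d_{n+1}$.

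Subtracting (\ref{map10}) from (\ref{map9}) in each of these two regimes and simplifying (using $\m_{n+2}^b=\m_n^a+2\m_{n+1}^a$) then yields the closed scalar recurrence
\[
d_{n+2}=
\begin{cases}
-d_n-d_{n+1}, & d_{n+1}\le 0,\\
-d_n-2d_{n+1}, & d_{n+1}\ge 0,
\end{cases}
\]
which no longer refers to $\m^a$ and $\m^b$ individually. Since $d_{n+2}$ depends only on the pair $(d_n,d_{n+1})$, the sequence $\{d_n\}$ is the orbit of a deterministic first-order dynamical system on $\mathbb{Z}^2$; it is periodic from the start as soon as the initial state recurs.

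I would then verify the two cases directly. For $i=1$ the data $\m_j^a(c_1)=\delta_{1j}$, $\m_j^b(c_1)=0$ give $(d_1,d_2)=(1,0)$; iterating the reduced recurrence eight times reproduces the list $v=[1,0,-1,1,-1,0,1,-2]$ and returns the state to $(d_9,d_{10})=(1,0)=(d_1,d_2)$, whence $d_k=v_{k\bmod 8}$ for all $k$. For $i=2$ the initial state is $(d_1,d_2)=(0,1)$, which is exactly the state $(d_6,d_7)=(v_6,v_7)$ reached by the $i=1$ orbit; since both orbits obey the same deterministic recurrence they must coincide after this alignment, giving $d_k(c_2)=d_{k+5}(c_1)=v_{k-3\bmod 8}$. (Equivalently, one may simply run the reduced recurrence from $(0,1)$ and read off the eight values.)

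The only genuine subtlety is the piecewise definition of (\ref{map9}): a priori one would have to track, step by step, which branch of the $\min$ is active, and the coupling between $\m^a$ and $\m^b$ makes this look delicate. The key that unlocks the whole argument is the identity above, showing the branch selection is dictated entirely by the sign of $d_{n+1}$; this decouples the difference from the individual multiplicities and turns the claim into a short, mechanical check of a scalar orbit. I expect everything after that identification to be routine.
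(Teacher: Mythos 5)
Your proof is correct, and while it performs the same underlying arithmetic as the paper's, it organises it in a genuinely cleaner way. The paper argues by induction on $k$ with one case per residue class modulo $8$: it assumes $\m_l^a(c_1)-\m_l^b(c_1)=v_{l\bmod 8}$ for $l<k$, substitutes the two hypothesised preceding differences into (\ref{map9}) and (\ref{map10}), resolves the $\min$ by inspection, and verifies the new difference; only the case $k\equiv 1\bmod 8$ is written out, with the remaining seven cases and the whole of $c_2$ left to the reader. Your key identity --- that the two arguments of the $\min$ differ by exactly $\m_{n+1}^a-\m_{n+1}^b=d_{n+1}$, so the active branch is dictated by the sign of $d_{n+1}$ alone --- is precisely what makes each of the paper's cases work, but the paper never isolates it; making it explicit collapses the coupled system to the autonomous recurrence $d_{n+2}=-d_n-d_{n+1}$ for $d_{n+1}\le 0$ and $d_{n+2}=-d_n-2d_{n+1}$ for $d_{n+1}\ge 0$ (consistent at $d_{n+1}=0$), which I have checked agrees with (\ref{map9})--(\ref{map10}). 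Periodicity then needs no induction hypothesis: the orbit of the state $(d_n,d_{n+1})$ starting from $(1,0)$ returns to $(1,0)$ at $(d_9,d_{10})$ after reproducing $v$, and your shift observation $(d_1(c_2),d_2(c_2))=(0,1)=(d_6(c_1),d_7(c_1))$ disposes of $i=2$ without a second round of computation. What this buys is a proof in which every case is actually verified in eight mechanical steps, rather than one representative case plus an appeal to ``the same technique''; what it costs is nothing, since the branch-selection identity is a one-line computation.
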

\begin{proof}
By induction. We only give the case $k \equiv 1 \mod 8$ for $c_{1}$. For $k=1$, it is evident. Suppose the claim is true for $i < k$. Then, we have  $\m_{k-1}^{a}(c_{1})=\m_{k-1}^{b}(c_{1})-2$ and   $\m_{k-2}^{a}(c_{1})=\m_{k-2}^{b}(c_{1})+1$, hence
  \begin{align*}
 \m_{k}^{a}(c_{1})&= min(\m_{k-1}^{a}(c_{1}) + \m_{k-2}^{b}(c_{1}) + \m_{k-1}^{b}(c_{1}), 2\m_{k-1}^{b}(c_{1}) + \m_{k-2}^{b}(c_{1}))\\\notag
 &= 2 \m_{k-1}^{b}(c_{1}) + \m_{k-2}^{b}(c_{1})-2, \text{ and } \\
 \m_{k}^{b}(c_{1}) &= \m_{k-2}^{a}(c_{1}) +2\m_{k-1}^{a}(c_{1})\\
 &=2\m_{k-1}^{b}(c_{1}) + \m_{k-2}^{b}(c_{1})-3.
\end{align*} So $\m_{k}^{a}(c_{1})-\m_{k}^{b}(c_{1})=1$. By applying the same technique for other cases  $(k\equiv 2,...,8\mod 8)$ and for $c_{2}$, the lemma is proven.
\end{proof} 
 
From (\ref{map14}), (\ref{map15}), (\ref{map16}), it follows that
\begin{equation}\label{map17}
\alpha_{n}:=\frac{\alpha a_{n}}{c_{n}c_{n-2}g_{n}}\quad \text{ and } \quad
\beta_{n}:=\frac{\beta b_{n}}{c_{n-1}^{2}g_{n}}
\end{equation}
are polynomials in $c_{1}$, $c_{2}$. As a Corollary to Lemma 1, it follows that $\alpha_{n}$ and $\beta_{n}$   are periodic sequences of  period $8$.
\begin{corollary}
We have: 
\begin{equation}\label{map19}
\alpha_{n}=\alpha c_{1}^{p_{n\mod8}}c_{2}^{p_{n-3\mod8}}\quad \text{ and }  \quad \beta_{n}=\beta c_{1}^{q_{n\mod8}}c_{2}^{q_{n-3\mod8}},
\end{equation}
with $p=[1,0,0,1,0,0,1,0] \quad and \quad q=[0,0,1,0,1,0,0,2]$.
\end{corollary}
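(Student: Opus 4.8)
The plan is to substitute the explicit factorisations (\ref{map14}), (\ref{map15}) and (\ref{map16}) directly into the definitions (\ref{map17}) and simplify, so that every factor except powers of $c_{1}$ and $c_{2}$ cancels; the surviving exponents are then read off from Lemma 1. I would first treat $\alpha_{n}$ for $n>4$. Here the denominator is $c_{n}c_{n-2}g_{n}=c_{1}^{\m_{n}^{g}(c_{1})}c_{2}^{\m_{n}^{g}(c_{2})}(\prod_{i=3}^{n-3}c_{i}^{\m_{n-i}})c_{n-2}^{2}c_{n}$, in which the product $\prod_{i=3}^{n-3}c_{i}^{\m_{n-i}}$, the factor $c_{n}$, and the factor $c_{n-2}^{2}$ (one copy from $c_{n-2}$, one from $g_{n}$) all match the corresponding factors of $a_{n}$. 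Hence
\[
\frac{a_{n}}{c_{n}c_{n-2}g_{n}}=c_{1}^{\,\m_{n}^{a}(c_{1})-\m_{n}^{g}(c_{1})}\,c_{2}^{\,\m_{n}^{a}(c_{2})-\m_{n}^{g}(c_{2})}.
\]
An identical bookkeeping for $\beta_{n}$, in which $c_{n-1}^{2}$ and $c_{n-2}$ cancel against the corresponding factors of $b_{n}$, gives
\[
\frac{b_{n}}{c_{n-1}^{2}g_{n}}=c_{1}^{\,\m_{n}^{b}(c_{1})-\m_{n}^{g}(c_{1})}\,c_{2}^{\,\m_{n}^{b}(c_{2})-\m_{n}^{g}(c_{2})}.
\]
This already shows that $\alpha_{n}$ and $\beta_{n}$ are monomials in $c_{1},c_{2}$, confirming their polynomiality.

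The next step is to convert the exponents, which involve $\m_{n}^{g}(c_{i})=\min(\m_{n}^{a}(c_{i}),\m_{n}^{b}(c_{i}))$, into positive parts of the differences controlled by Lemma 1. Writing $d_{i}:=\m_{n}^{a}(c_{i})-\m_{n}^{b}(c_{i})$, one has the elementary identities $\m_{n}^{a}(c_{i})-\m_{n}^{g}(c_{i})=\max(0,d_{i})$ and $\m_{n}^{b}(c_{i})-\m_{n}^{g}(c_{i})=\max(0,-d_{i})$. By Lemma 1, $d_{1}=v_{n\mod 8}$ and $d_{2}=v_{n-3\mod 8}$ with $v=[1,0,-1,1,-1,0,1,-2]$. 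Applying $\max(0,\cdot)$ componentwise to $v$ produces $[1,0,0,1,0,0,1,0]=p$, and applying $\max(0,-\cdot)$ produces $[0,0,1,0,1,0,0,2]=q$; equivalently one may note $p-q=v$ and read off $p,q$ as the positive and negative parts of $v$. Therefore the exponent of $c_{1}$ in $\alpha_{n}$ is $p_{n\mod 8}$ and that of $c_{2}$ is $p_{n-3\mod 8}$, while the corresponding exponents in $\beta_{n}$ are $q_{n\mod 8}$ and $q_{n-3\mod 8}$, which is exactly (\ref{map19}). In particular $\alpha_{n}$ and $\beta_{n}$ are periodic of period $8$.

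I expect essentially no hard step here: once Lemma 1 is available the statement is a direct simplification, and the corollary is in fact the arithmetic of extracting positive and negative parts of the period-$8$ vector $v$. The only point demanding a little care is the bookkeeping of the cancellations together with the small-index cases $n\le 4$ in (\ref{map14}) and (\ref{map15}), where the factorisations take a different, non-generic form; these would have to be checked separately to confirm that the period-$8$ pattern already holds from the initial terms. Being finite in number they are routine, so the substance of the argument is the computation above.
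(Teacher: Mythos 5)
Your proposal is correct and follows essentially the same route as the paper: express the exponents of $c_1,c_2$ in $\alpha_n,\beta_n$ as $\m_n^{a}-\m_n^{g}$ and $\m_n^{b}-\m_n^{g}$, rewrite these as $\max(0,\pm(\m_n^{a}-\m_n^{b}))$, and read off $p=\max(0,v)$, $q=\max(0,-v)$ from Lemma 1. The only (harmless) addition is your explicit remark about checking the non-generic small-index cases.
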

\begin{proof}
We have:
\begin{equation*}
\alpha_{n}=\frac{\alpha a_{n}}{g_{n}c_{n}c_{n-2}}=\alpha c_{1}^{{\m_{n}^{a} (c_{1})} -{\m_{n}^{g}(c_{1})} }  c_{2}^{{\m_{n}^{a}(c_{2})}   -{\m_{n}^{g}(c_{2})}},
\end{equation*} where
\[
\m_{n}^{a} -\m_{n}^{g}=
\left\{
\begin{array}{ll}
\m_{n}^{a} -\m_{n}^{b}&\text{ if } \m_{n}^{a}-\m_{n}^{b}> 0, \\
0 &\text{ if } \m_{n}^{a} -\m_{n}^{b} \leqslant 0.
\end{array}
\right.
\] Therefore,
\[
\m_{n}^{a} (c_{i}) -\m_{n}^{g}(c_{i})=
\left\{
\begin{array}{ll}
p_{n \mod 8} &\text{ if } i=1, \\
p_{n-3 \mod 8} &\text{ if } i=2,
\end{array}
\right.
\] where $p_{k}=max(0,v_{k})$. Similarly, we have:
\begin{equation*}
\beta_{n}=\frac{\beta b_{n}}{g_{n}c_{n-1}^{2}}=\beta c_{1}^{{\m_{n}^{b} (c_{1})} -{\m_{n}^{g}(c_{1})} }  c_{2}^{{\m_{n}^{b}(c_{2})}-{\m_{n}^{g}(c_{2})}}, 
\end{equation*} where
\[
\m_{n}^{b} (c_{i}) -\m_{n}^{g}(c_{i})=
\left\{
\begin{array}{ll}
q_{n \mod 8} &\text{ if } i=1, \\
q_{n-3 \mod 8} &\text{ if } i=2,
\end{array}
\right.
\] with $q_{k}=max(0,-v_{k})$.
\end{proof}
 
\begin{theorem}
The polynomials $c_n$, as defined by (\ref{map14}), satisfy
\begin{equation}\label{XV1}
\quad c_{3}=\alpha c_{2} +\beta,\quad c_{4}=\alpha c_{3}+\beta c_{1}c_{2}^{2},\quad c_{5}=\alpha c_{1}c_{2}c_{4}+\beta c_{3}^{2},\quad c_{6}=\alpha c_{5}c_{3}+\beta c_{1}c_{4}^{2},
\end{equation} and, for $n \geqslant 6$, 
\begin{equation}\label{X2}
c_{n-3}c_{n+1}=\alpha_{n}c_{n}c_{n-2}+\beta_{n}c_{n-1}^{2}.
\end{equation}
\end{theorem}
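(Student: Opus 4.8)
The plan is to reduce the identity (\ref{X2}) to a single factorisation statement about the polynomials $c_j$ and then verify it by bookkeeping of multiplicities. First I would simplify the right-hand side. From the definitions (\ref{map17}) we have $\alpha_n c_n c_{n-2}=\alpha a_n/g_n$ and $\beta_n c_{n-1}^2=\beta b_n/g_n$, so that
\[
\alpha_n c_n c_{n-2}+\beta_n c_{n-1}^2=\frac{\alpha a_n+\beta b_n}{g_n}=\frac{w_{n+1}}{g_n},
\]
using $w_{n+1}=\alpha a_n+\beta b_n$ together with the fact that $g_n$ divides both $a_n$ and $b_n$, hence $w_{n+1}$. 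Thus (\ref{X2}) is equivalent to the single claim $c_{n-3}c_{n+1}=w_{n+1}/g_n$. The four initial relations (\ref{XV1}) I would establish directly: since $c_1=a_1$, $c_2=a_2$, $c_3=a_3$, and $a_m=w_m b_{m-1}b_{m-2}$ by (\ref{map7}), one computes $a_3,\dots,a_6$ from (\ref{map7})--(\ref{map8}), divides by the powers of lower $c_i$ prescribed by (\ref{map14}), and reads off $c_3,\dots,c_6$.

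For the general case $n\geq6$ I would use $a_{n+1}=w_{n+1}b_nb_{n-1}$. Cancelling the nonzero factor $b_nb_{n-1}$, the claim $c_{n-3}c_{n+1}=w_{n+1}/g_n$ becomes the polynomial identity $a_{n+1}=c_{n-3}c_{n+1}\,g_n\,b_n\,b_{n-1}$. Substituting the definition of $c_{n+1}$ from (\ref{map14}), that is $c_{n+1}=a_{n+1}/\prod_{j\le n}c_j^{e_j}$ with $e_j$ the exponents of $c_j$ appearing there, and cancelling $c_{n+1}$, this reduces to
\[
\prod_{j\le n}c_j^{e_j}=c_{n-3}\,g_n\,b_n\,b_{n-1}.
\]
Both sides are now products of the $c_j$ with $j\le n$, so it suffices to compare, for each such $j$, the exponent of $c_j$ on the two sides using the explicit factorisations (\ref{map14}), (\ref{map15}) and (\ref{map16}).

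I would organise this comparison by the location of $j$. For $3\le j\le n-4$ (the bulk) all three of $g_n$, $b_n$, $b_{n-1}$ contribute the generic multiplicities, and the required equality $\m_{n+1-j}=2\m_{n-j}+\m_{n-1-j}$ is precisely the recursion $\m_{k+1}=2\m_k+\m_{k-1}$. For $j\in\{n-2,n-1,n\}$ the special boundary factors in (\ref{map15})--(\ref{map16}) supply the small values $\m_1,\m_2,\m_3$ and are checked by hand. At $j=n-3$ the three contributions fall exactly one short of $\m_4=10$, and this deficit is accounted for by the explicit factor $c_{n-3}$, which is why it appears. For $j\in\{1,2\}$ the needed identity is $\m_{n+1}^a(c_j)=\m_n^g(c_j)+\m_n^b(c_j)+\m_{n-1}^b(c_j)$; since $\m_n^g=\min(\m_n^a,\m_n^b)$, the right-hand side equals $\min(\m_n^a+\m_n^b+\m_{n-1}^b,\,2\m_n^b+\m_{n-1}^b)$, which is exactly the index-shifted minimum recursion (\ref{map9}) for $\m_{n+1}^a$, so the equality holds regardless of which branch is active (the periodicity of Lemma 1 and its Corollary only serves to make the branch, and hence $\alpha_n,\beta_n$, explicit).

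The main obstacle is the boundary bookkeeping: one must pinpoint exactly where the generic pattern $\m_n^a(c_j)=\m_n^b(c_j)=\m_{n-j}$ fails, namely at the top-index factors of $b_{n-1}$, $b_n$ and $g_n$, and verify that these discrepancies assemble into precisely the single extra factor $c_{n-3}$ and the correct powers $c_{n-2}$, $c_{n-1}^2$. Some care is also needed at $n=6$, where the bulk range is empty and $j=n-3=3$ coincides with the bottom of the product in (\ref{map14}); this smallest case I would treat by direct substitution as a sanity check on the general argument.
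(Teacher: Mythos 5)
Your argument is correct, and it takes a genuinely different route from the paper's. The paper verifies (\ref{XV1}) directly just as you do, but for (\ref{X2}) it solves (\ref{map17}) for $a_n$ and $b_n$, substitutes into \emph{both} (\ref{map7}) and (\ref{map8}), and first obtains $c_{n-3}c_{n+1}=Z_n(\alpha_n c_n c_{n-2}+\beta_n c_{n-1}^2)$ with an explicit prefactor $Z_n$ built from the $g_k$, $c_k$, $\alpha_k$, $\beta_k$; the relation coming from (\ref{map8}) then eliminates $g_{n+1}$ and reduces $Z_n$ to $\alpha^{4}\beta_{n-1}\beta_{n}\beta_{n+1}/(\alpha_{n-1}\alpha_{n}^{2}\alpha_{n+1}\beta^{3})$, which equals $1$ by the exponent identity $q_{n-1}+q_{n}+q_{n+1}=p_{n-1}+2p_{n}+p_{n+1}$ from Corollary 2. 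You instead collapse the right-hand side to $w_{n+1}/g_n$ at the outset, so that (\ref{X2}) becomes the single polynomial identity $a_{n+1}=c_{n-3}c_{n+1}\,g_n b_n b_{n-1}$, verified exponent-by-exponent from (\ref{map14})--(\ref{map16}); your only nontrivial inputs are the recursion $\m_{k+1}=2\m_{k}+\m_{k-1}$ in the bulk and, for $j\in\{1,2\}$, the identity $\min(\m_{n}^{a},\m_{n}^{b})+\m_{n}^{b}+\m_{n-1}^{b}=\min(\m_{n}^{a}+\m_{n}^{b}+\m_{n-1}^{b},\,2\m_{n}^{b}+\m_{n-1}^{b})$, which is precisely the shifted recursion (\ref{map9}) for $\m_{n+1}^{a}$. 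What your route buys is independence from Lemma 1 and Corollary 2 --- the periodicity is never needed, only the defining ultradiscrete system --- at the price of the boundary bookkeeping you describe (your accounting at $j=n-3$, namely $\m_3+\m_3+1=9$ versus $\m_4=10$ with the deficit supplied by the explicit $c_{n-3}$, checks out, as does the degenerate case $n=6$). The paper's route is computationally shorter but leans on the periodic structure of the coefficients established beforehand.
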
  
\begin{proof}
Using equations (\ref{map7}) and (\ref{map8}), initial values and  (\ref{map14}), we find:
\begin{align*}
c_{3}=a_{3}=(\alpha a_{2}+\beta b_{2})b_{1}b_{2}=(\alpha c_{2}+\beta).
\end{align*}  Furthermore, 
\begin{align*}
c_{4}=\frac{a_{4}}{c_{1}^{\m_{4}^{a}(c_{1})}c_{2}^{\m_{4}^{a}(c_{2})}}=\frac{(\alpha a_{3}+\beta b_{3})b_{2}b_{3}}{c_{1}c_{2}^{2}}=\alpha c_{3}+\beta c_{1}c_{2}^{2},
\end{align*} as $b_{3}=c_{1}c_{2}^{2}$, $\m_{4}^{a}(c_{1})=1$ and $\m_{4}^{a}(c_{2})=2$. Similarly, the formulae for $c_{5}$ and $c_{6}$ are obtained.
Solving equations (\ref{map17}) for $a_{n}$ and $b_{n}$ and substituting in equation (\ref{map7}), we find:
\begin{align*}
\frac{\alpha_{n+1}c_{n+1}c_{n-1}g_{n+1}}{\alpha}=(\alpha_{n}c_{n}c_{n-2}g_{n}+\beta_{n} c_{n-1}^{2}g_{n}) \frac{\beta_{n-1}\beta_{n}c_{n-1}^{2}c_{n-2}^{2}g_{n-1}g_{n}}{\beta^2}.
\end{align*} Thus,
\begin{align*}
c_{n-3}c_{n+1}=Z_{n}(\alpha_{n}c_{n}c_{n-2}+\beta_{n} c_{n-1}^{2}),
\end{align*} with
\begin{align*}
Z_{n}=\frac{\beta_{n-1}\beta_{n}}{\beta^{2}}\frac{\alpha}{\alpha_{n+1}}\frac{g_{n-1}g_{n}^{2}}{g_{n+1}} c_{n-1}c_{n-2}^{2}c_{n-3}.
\end{align*} Substituting in equation (\ref{map8}) gives us:
\begin{align*}
g_{n+1}=\frac{\beta}{\beta_{n+1}}  \frac{\alpha_{n}^{2}\alpha_{n-1}}{\alpha^{3}}g_{n-1}g_{n}^{2}c_{n-1}c_{n-2}^{2}c_{n-3}, 
\end{align*} which we use to simplify 
\begin{align*}
Z_{n}=\frac{\alpha^{4}\beta_{n-1}\beta_{n}\beta_{n+1}}{ \alpha_{n-1}\alpha_{n}^{2}\alpha_{n+1}\beta^{3}}=1
\end{align*} as $q_{n-1}+q_{n}+q_{n+1}=p_{n-1}+2p_{n}+p_{n+1}$.
\end{proof} The fact that the sequence $\{c_{n}\}_{n=1}^\infty$, with special initial values given by (\ref{XV1}) and generated by the rational recurrence (\ref{X2}), is a polynomial sequence is curious. First of all, it follows from the definition of $c_{n}$ given by (\ref{map14}) which is based on factorization properties of the QRT map (\ref{fqrt}). But there is a second explanation. When we express the coefficients, cf. Lemma 2, in terms of the initial values of the QRT-map (\ref{fqrt}), $c_{1}=a_{1}=f_{1}$ and $c_{2}=a_{2}=f_{2}$, i.e.
\begin{equation}\label{M3}
\alpha_{n}^{f}=
\left\{
\begin{array}{ll}
\alpha f_{1} &\text{ if } n \equiv 1 \mod 8, \\
\alpha f_{2} &\text{ if } n \equiv 2 \mod 8,\\
\alpha f_{1}f_{2} &\text{ if } n \equiv 4,7 \mod 8,\\
\alpha &\text{ if } n \equiv 3,5,6,8 \mod 8,
\end{array}
\right. \quad
\beta_{n}^{f}=
\left\{
\begin{array}{ll}
\beta &\text{ if } n \equiv 1,2,4,7 \mod 8, \\
\beta f_{1}f_{2}^2 &\text{ if } n \equiv 3 \mod 8,\\
\beta f_{1} &\text{ if } n \equiv 5 \mod 8,\\
\beta f_{2} &\text{ if } n \equiv 6 \mod 8,\\
\beta f_{1}^{2}f_{2} &\text{ if } n \equiv 8 \mod 8,
\end{array}
\right.
\end{equation}
and supplement the recurrence
\begin{equation}\label{X2X}
c_{n-3}c_{n+1}=\alpha^f_{n}c_{n}c_{n-2}+\beta^f_{n}c_{n-1}^{2}.
\end{equation}
with initial values $c_{i}=1$ for $i \in \{-1,0,1,2\}$ we find the following expressions 
\[
\begin{array}{ll}
c_{3}=\alpha f_{2}+\beta, 
\quad c_{4}=\alpha c_{3}+\beta f_{1}f_{2}^{2},\quad 
c_{5}=\alpha f_{1}f_{2}c_{4}+\beta_{3}^{2},\quad  
c_{6}=\alpha c_{5}c_{3}+\beta f_{1}c_{4}^{2},
\end{array}
\]
which agree with (\ref{XV1}). Therefore, the fact that the sequence consist of polynomials can also be explained by the Laurent property of (\ref{X2X}), which we establish in section \ref{subs}.

\subsection{To periodic Somos-5}
We will now show how the QRT-map
\begin{equation} \label{hqrt}
h_{n+1} h_{n} h_{n-1}=\gamma h_n + \delta,
\end{equation}
which is related to Somos-5,
\begin{equation} \label{Som5}
\sigma_{n+3}\sigma_{n-2}=\gamma\sigma_{n+2}\sigma_{n-1}+\delta  \sigma_{n+1}\sigma_n,
\end{equation}
via the transformation, cf. \cite{Hone5},
\begin{equation} \label{hs}
h_n=\frac{\sigma_{n+2}\sigma_{n-1}}{\sigma_{n+1}\sigma_{n}},
\end{equation}
leads to a special case of periodic Somos-5, equation (\ref{S5P}).
Substituting
$h_{n}=a_{n}/b_{n}$, the homogenised system for numerators and denominators is given by:
\begin{align} \label{map22}
a_{n+1}&=w_{n+1}b_{n-1} \\
b_{n+1}&=a_{n}a_{n-1},\label{map23}
\end{align} where $w_{n+1}:=\gamma a_{n}+\delta b_{n}$. We take  $\{b_i=1\}_{i=1}^2$, so that $\{a_i=h_i\}_{i=1}^2$. Iterating (\ref{map22}), (\ref{map23}), we find:
\begin{align*}
a_{n+2} = b_{n}s_{1},
\ &b_{n+2}=w_{n+1}b_{n-1}a_n, \\
a_{n+3} =a_{n}a_{n-1}s_2,
\ &b_{n+3} = b_{n}b_{n-1}s_{1}w_{n+1}, \\
a_{n+4} =a_{n}b_{n-1}s_3w_{n+1},
\ &b_{n+4} = a_{n} a_{n-1}b_{n} s_{1}s_{2}, \\
a_{n+5} =a_{n} b_{n-1}b_{n}s_{1}s_{4}w_{n+1},
\ &b_{n+5} = a_{n-1} a_{n}^2 b_{n-1}s_{2}s_{3}w_{n+1},
\end{align*} where $\{s_{i}\}_{1}^{4}$ are irreducible polynomials in $\{a_{n+i},b_{n+i}\}_{i=-1}^{0}$, $\delta$, and $\gamma$. In addition, from (\ref{map22}) and (\ref{map23}), the ultra-discrete system of recurrences for a lower bound on the multiplicities is:
\begin{align}\label{map24}
 \m_{n+1}^{a} &= min(\m_{n}^{a} +\m_{n-1}^{b};\m_{n}^{b} +\m_{n-1}^{b}),\\
\m_{n+1}^{b} &=\m_{n}^{a} +\m_{n-1}^{a}.\label{map25}
\end{align} To get a lower bound for the multiplicity of $w_{k}$ $(k>3)$ in the sequences $\{a_{n}\}$ and $\{b_{n}\}$, we solve (\ref{map24}) and (\ref{map25}) with the following initial values: 
$\m_{k+1}^{a}=\m_{k+2}^{a}=\m_{k+3}^{b}=0$ and $\m_{k+1}^{b}=\m_{k+2}^{b}=\m_{k+3}^{a}=\m_{k+4}^{a}=\m_{k+4}^{b}=1$. We find, for all $n \geqslant k+4$, that
$
\m_{n}^{a}(w_{k})= \m_{n}^{b}(w_{k})=\m_{n-k-3}$ where
\[
\m_{1}=1,\  \m_{2}=2,\ \m_{n+2}=\m_{n+1}+\m_{n}.
\]
For $i\in\{1,2\}$ we define sequences $\m_{n}^{a}(d_{i})$ and $\m_{n}^{b}(d_{i})$ by (\ref{map24}) and (\ref{map25}) and  the initial values $\m_{j}^{a}(d_{i})=\delta_{ij}$ and $\m_{j}^{b}(d_{i})=0$. Then, a polynomial sequence $\{d_{n}\}_{n=1}^\infty$ is defined by 
\begin{equation}\label{X4}
a_{n}=
\left\{
\begin{array}{ll}
d_{n} &\text{ if } n\leqslant 4, \\
d_{1}d_{2}d_{5} &\text{ if } n=5,\\
(\prod_{i=1}^{2}d_{i}^{\m^{a}_{n}(d_{i})})(\prod_{i=3}^{n-4}d_{i}^{\m_{n-i-3}})d_{n-3}d_{n}, &\text{ if }n > 5,
\end{array}
\right.
\end{equation} and we have
\begin{equation}\label{map37}
b_{n}=
\left\{
\begin{array}{ll}
1 &\text{ if } n\leqslant 2,\\
d_{n-2}d_{n-1} &\text{ if } n \in \{3,4\},\\
(\prod_{i=1}^{2}d_{i}^{\m^{b}_{i}(d_{i})})(\prod_{i=3}^{n-4}d_{i}^{\m_{n-i-3}})d_{n-2}d_{n-1} &\text{ if }n \geqslant 5.
\end{array}
\right.
\end{equation}
As before, the difference between  the multiplicities of $d_{1}$ and $d_{2}$ is periodic. We have:
\[
\m_{k}^{a}(d_{i})-\m_{k}^{b}(d_{i})=
\left\{
\begin{array}{ll}
w_{k \mod  7} &\text{ if } i=1, \\
w_{k-4 \mod  7} &\text{ if } i=2,
\end{array}
\right.
\] where $w=[1, 0, -1, 0, 1, -1, -1]$, which can be proven by induction as was done in the proof of Lemma 1.
 From this, it follows that in terms of the initial values of the map (\ref{hqrt}), $h_{1}$ and $h_{2}$, we have
\begin{equation}\label{map392}
 \gamma^h_{n}:=\frac{\gamma a_{n}}{d_{n}d_{n-3}g_{n}}=\gamma h_{1}^{r_{n \mod 7}}h_{2}^{r_{n-4 \mod 7}}
\text{ and }
\delta^h_{n}:=\frac{\delta b_{n}}{d_{n-1}d_{n-2}g_{n}}=\delta h_{1}^{s_{n \mod 7}}h_{2}^{s_{n-4 \mod 7}},
\end{equation}
where
\begin{equation}\label{map39}
r_{k}=\text{max}(0,w_{k})=[1,0,0,0,1,0,0]\text{ and }
s_{k}=\text{max}(0,-w_{k})=[0,0,1,0,0,1,1].
\end{equation}
Solving (\ref{map392}) for $a_{n}$ and  $b_{n}$ in terms of $\gamma^h_{n}$ and $\delta^h_{n}$  and substituting into  (\ref{map22}) and (\ref{map23}), we find the following recursion relations.
\begin{theorem}
The sequence $\{d_{n}\}_{n=1}^\infty$, defined by (\ref{X4}), satisfies $d_1=h_1$, $d_2=h_2$, $d_{3}=\gamma h_{2}+\delta$,
\begin{align}\label{MAP1}
\begin{array}{ll}
 d_{4}=\gamma d_{3}+\delta h_{1}h_{2}, \quad d_{5}=\gamma d_{4}+\delta h_{2}d_{3},\quad d_{6}=\gamma h_{1}h_{2}d_{5}+\delta d_{3}d_{4},\quad 
d_{7}=\gamma d_{3}d_{6}+\delta h_{1}d_{4}d_{5},
\end{array}
\end{align}
and, for all $n \geqslant 8$, 
\begin{align}\label{M5}
d_{n-4}d_{n+1}=\gamma_{n}^{h}d_{n}d_{n-3}+\delta_{n}^{h}d_{n-1}d_{n-2}.
\end{align} We note that (\ref{MAP1}) are obtained from (\ref{M5}) by taking initial values $d_{i}=1$ for $i \in \{-2,-1,0,1,2\}$.
\end{theorem}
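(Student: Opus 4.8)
The plan is to follow exactly the two-step strategy that established the Somos-4 recurrence (\ref{X2}): first verify the low-index relations $d_3,\dots,d_7$ by direct computation from the homogenised system, and then derive the generic recurrence (\ref{M5}) for $n\geqslant 8$ by solving (\ref{map392}) for $a_n,b_n$, substituting back into (\ref{map22}) and (\ref{map23}), and showing that the spurious prefactor collapses to $1$.

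For the base cases I would compute $a_3,\dots,a_7$ (together with the $b_n$ needed along the way) directly from (\ref{map22}), (\ref{map23}) and the initial data $b_1=b_2=1$, $a_1=h_1$, $a_2=h_2$, and then strip off the divisors prescribed by (\ref{X4}) and (\ref{map37}). For instance $d_3=a_3=(\gamma a_2+\delta b_2)b_1=\gamma h_2+\delta$; using $b_3=d_1d_2=h_1h_2$ gives $d_4=a_4=\gamma d_3+\delta h_1h_2$; and since $a_5=d_1d_2d_5$ with $b_4=d_2d_3$, dividing $a_5=(\gamma a_4+\delta b_4)b_3$ by $d_1d_2$ yields $d_5=\gamma d_4+\delta h_2 d_3$. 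The formulae for $d_6,d_7$ follow the same way once the multiplicities $\m_n^a(d_i)$ appearing in (\ref{X4}) are read off from (\ref{map24})--(\ref{map25}). This mirrors the computation of $c_3,\dots,c_6$ and is entirely mechanical.

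For $n\geqslant 8$, where (\ref{X4}) and (\ref{map37}) are in their uniform regime for every index occurring, I would write $a_n=\gamma_n^h d_nd_{n-3}g_n/\gamma$ and $b_n=\delta_n^h d_{n-1}d_{n-2}g_n/\delta$ from (\ref{map392}) and insert these into (\ref{map22}). After cancelling the common factor $d_{n-2}$ this gives
\begin{equation*}
d_{n+1}=\frac{\gamma\,\delta_{n-1}^h\,g_ng_{n-1}\,d_{n-3}}{\delta\,\gamma_{n+1}^h\,g_{n+1}}\bigl(\gamma_n^h d_nd_{n-3}+\delta_n^h d_{n-1}d_{n-2}\bigr),
\end{equation*}
so (\ref{M5}) is equivalent to the vanishing of the extra prefactor, i.e. to
\begin{equation*}
Z_n:=\frac{\gamma\,\delta_{n-1}^h\,g_ng_{n-1}\,d_{n-3}d_{n-4}}{\delta\,\gamma_{n+1}^h\,g_{n+1}}=1.
\end{equation*}
The gcd $g_{n+1}$ is then eliminated using (\ref{map23}): inserting the same closed forms into $b_{n+1}=a_na_{n-1}$ and cancelling $d_nd_{n-1}$ produces $g_{n+1}=\delta\,\gamma_n^h\gamma_{n-1}^h\,d_{n-3}d_{n-4}\,g_ng_{n-1}/(\delta_{n+1}^h\gamma^2)$, whereupon every power of the $d_i$ and of the $g_i$ cancels and
\begin{equation*}
Z_n=\frac{\gamma^3\,\delta_{n-1}^h\delta_{n+1}^h}{\delta^2\,\gamma_{n+1}^h\gamma_n^h\gamma_{n-1}^h}.
\end{equation*}

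Finally, substituting $\gamma_n^h=\gamma h_1^{r_{n\bmod 7}}h_2^{r_{n-4\bmod 7}}$ and $\delta_n^h=\delta h_1^{s_{n\bmod 7}}h_2^{s_{n-4\bmod 7}}$ from (\ref{map392}), the scalar prefactors $\gamma^3/\delta^2$ cancel against those of the coefficients, and $Z_n=1$ reduces to the pair of exponent identities $s_{n-1}+s_{n+1}=r_{n-1}+r_n+r_{n+1}$ (from the $h_1$ powers) and $s_{n-5}+s_{n-3}=r_{n-5}+r_{n-4}+r_{n-3}$ (from the $h_2$ powers), with all indices read modulo $7$. The second is the first shifted by $4$, so a single identity must be checked, and it holds for $r=[1,0,0,0,1,0,0]$, $s=[0,0,1,0,0,1,1]$ by inspecting the seven residues (e.g. for $n\equiv 2$: $s_1+s_3=1=r_1+r_2+r_3$). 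This is the precise analogue of the identity $q_{n-1}+q_n+q_{n+1}=p_{n-1}+2p_n+p_{n+1}$ used for Somos-4, the factor $2$ now absent because (\ref{map23}) is symmetric. I expect the main obstacle to be bookkeeping rather than conceptual: one must keep the index shifts in (\ref{X4}), (\ref{map37}) and (\ref{map392}) perfectly aligned so that the product divisors hidden inside the $g_i$ cancel completely before the periodic identity is invoked, and one must confirm that $n\geqslant 8$ is exactly the range in which each factorisation used attains its uniform form.
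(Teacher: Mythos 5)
Your proposal is correct and follows essentially the same route as the paper, which only sketches this proof but carries out the identical computation in detail for the Somos-4 analogue: verify $d_3,\dots,d_7$ directly from (\ref{map22})--(\ref{map23}) and the factorisations (\ref{X4}), (\ref{map37}), then solve (\ref{map392}) for $a_n,b_n$, substitute back, eliminate $g_{n+1}$ via $b_{n+1}=a_na_{n-1}$, and reduce the prefactor $Z_n=1$ to the periodic exponent identity $s_{n-1}+s_{n+1}=r_{n-1}+r_n+r_{n+1}$ (mod $7$), which indeed holds for all seven residues. The only blemish is a wording slip: you want the extra prefactor to equal $1$, not to ``vanish'', as your own displayed equation $Z_n=1$ correctly states.
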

Again, the fact that $\{d_{n}\}$ is a sequence of polynomials is therefore also explained by the Laurent property of (\ref{M5}), see section \ref{subs}. 

\bigskip

Finally, we'd like to mention that the
third order mapping \cite[Equation 2.9]{Hone5},
\begin{equation}
u_{n+2}u_{n}^{2}u_{n+1}^{2}u_{n-1}=\gamma u_{n}u_{n+1}+\delta,
\end{equation}
which is related to Somos-5 via
\[
u_n=\frac{\sigma_{n+1}\sigma_{n-1}}{\sigma_n^2},
\]
can be recursively factorised as $u_n=a_n/b_n$ with
\begin{equation}\label{map46}
a_{n}=
\left\{
\begin{array}{ll}
d_{n} &\text{ if } n\leqslant 4, \\
d_{1}d_{2}^{2}d_{3}^{3}d_{5} &\text{ if } n=5, \\
d_{1}^{\m_{n}^{a}(d_{1})}d_{2}^{\m_{n}^{a}(d_{2})}d_{3}^{\m_{n}^{a}(d_{3})}(\prod_{i=4}^{n-3}d_{i}^{\m_{n-i}})d_{n-2}^{3}d_{n}, &\text{ if }n > 5,
\end{array}
\right.
\end{equation}
and
\begin{equation}\label{map47}
b_{n}=
\left\{
\begin{array}{ll}
1 &\text{ if } n\leqslant 3 ,\\
d_{n-3}d_{n-2}^{2}d_{n-1}^{2} &\text{ if } n \in \{4,5\}, \\
d_{1}^{\m_{n}^{b}(d_{1})}d_{2}^{\m_{n}^{b}(d_{2})}d_{3}^{\m_{n}^{b}(d_{3})}(\prod_{i=4}^{n-3}d_{i}^{\m_{n-i}})d_{n-2}^{2}d_{n-1}^{2}, &\text{ if }n > 5,
\end{array}
\right.
\end{equation}
where 
\[
\m_{1}=0,\ \m_{2}=3,\ \m_{3}=7,\ \m_{n+2}=2\m_{n+1}+2\m_{n}+\m_{n-1},
\]
and, for $i\in\{1,2,3\}$, $\{\m_{n}^{a}(d_{i})\}$ and $\{\m_{n}^{b}(d_{i})\}$ are defined by initial values $\{\m_{j}^{a}(d_{i})=\delta_{ij},\m_{j}^{b}(d_{i})=0\}_{i,j=1}^3$, and
\begin{align}\label{map44}
 \m_{n+2}^{a} &= min(\m_{n}^{a}+\m_{n+1}^{a} + \m_{n}^{b}+ \m_{n+1}^{b}+\m_{n-1}^{b};2\m_{n}^{b}+2\m_{n+1}^{b} +\m_{n-1}^{b}),\\
\m_{n+2}^{b} &= 2\m_{n}^{a}+ 2\m_{n+1}^{a}+\m_{n-1}^{a}.\label{map45}
\end{align}
Here, the differences between the multiplicities of $d_{1}$, $d_{2}$, and $d_{3}$ are periodic sequences with period 14. We have:
\[
\m_{k}^{a}(d_{i})-\m_{k}^{b}(d_{i})=
\left\{
\begin{array}{ll}
h_{k \mod 14} &\text{ if } i=1, \\
h_{k \mod 14}+h_{k+3 \mod 14} &\text{ if } i=2,\\
h_{k-4 \mod 14}&\text{ if } i=3,
\end{array}
\right.
\] where $h=[1, 0, 0, -1, 1, 0, -1, 0, 1, -1, 0, 0, 1, -2]$, from which it follows that 
$
\phi_{n}:=\frac{ a_{n}}{d_{n}d_{n-2}g_{n}}
$
and
$ 
\psi_{n}:=\frac{ b_{n}}{d_{n-1}^{2}g_{n}}
$
are periodic with period 14. However, the coefficients of the periodic Somos-5 recurrence for the sequence $\{d_n\}$ defined by (\ref{map46}),
\begin{align}\label{map52}
d_{n-3}d_{n+2}=\gamma^{u}_{n+2} d_{n-2}d_{n+1}+\delta^{u}_{n+2} d_{n}d_{n-1},
\end{align}
turn out to have period 7,
\begin{align*}
\gamma^{u}_{n+2}=
&\gamma\frac{\psi_{n-1} \psi_{n} \psi_{n+1}\psi_{n+2}}{\phi_{n-1}\phi_{n}\phi_{n+1}\phi_{n+2}}
=\gamma u_1^{r_{n \mod 7}}u_2^{r_{n \mod 7}+r_{n-3 \mod 7}}u_3^{r_{n-3 \mod 7}},\\
\delta^{u}_{n+2}=&\delta\frac{\psi_{n-1} \psi_{n}^{2} \psi_{n+1}^{2}\psi_{n+2}}{\phi_{n-1}\phi_{n}^{2}\phi_{n+1}^{2}\phi_{n+2}}
=\delta u_1^{s_{n \mod 7}}u_2^{s_{n \mod 7}+r_{n-3 \mod 7}}u_3^{s_{n-3 \mod 7}},
\end{align*}
with $r,s$ as before. 
Thus, equation (\ref{map52}) sits inside the periodic Somos-5 family mentioned in the introduction, equation (\ref{S5P}).


\section{From Somos-4 \& 5 recurrences to Somos-4 \& 5 recurrences with periodic coefficients}
\label{SSOM}
In this section we derive the Somos sequences with periodic coefficients mentioned in the introduction,
which are more general than the ones obtained from QRT-maps in the previous section.
\subsection{Periodic Somos-4}
By taking $\tau_n=a_n/b_n$ in Somos-4 we find the
system of recurrences for polynomial sequences $\{a_n\}$ and $\{b_n\}$:
\begin{align} \label{mab}
a_{n+2}&=w_{n+2}b_{n-2} \\
b_{n+2}&=b_{n+1}b_n^2b_{n-1}a_{n-2},\label{maba}
\end{align}
with $w_{n+2}:=\alpha a_{n+1}b_n^2a_{n-1}+\beta b_{n+1}a_n^2b_{n-1}$. Taking $\{b_i=1\}_{i=1}^4$, we have
$\{a_i=\tau_i\}_{i=1}^4$. From (\ref{mab}) and (\ref{maba}), we find the following ultra-discrete system of recurrences for a lower bound on multiplicities:
\begin{align}\label{m2}
 \m_{n+2}^{a} &= min(\m_{n+1}^{a} + 2\m_{n}^{b}+\m_{n-1}^{a} + \m_{n-2}^{b};\m_{n+1}^{b}+ 2\m_{n}^{a}+\m_{n-1}^{b} +\m_{n-2}^{b}),\\
\m_{n+2}^{b} &= \m_{n+1}^{b}+ 2\m_{n}^{b}+\m_{n-1}^{b} + \m_{n-2}^{a}.\label{m3}
\end{align} 
Iterating the recurrences (\ref{mab}), (\ref{maba}) four times gives us
\begin{align*}
a_{n+3} = b_{n-1} b_{n+1} p_1,
\ &b_{n+3}=a_{n-2} a_{n-1} b_{n-1} b_n^3 b_{n+1}^3 , \\
a_{n+4} = a_{n-2} b_{n-1} b_n^4 b_{n+1}^3 p_2,
\ &b_{n+4} = a_{n-2}^3 a_{n-1} a_{n} b_{n-1}^3 b_{n}^7 b_{n+1}^6, \\
a_{n+5} = a_{n-2}^3 a_{n-1} b_{n-1}^3 b_{n}^9 b_{n+1}^{10}p_3,
\ &b_{n+5} =  a_{n-2}^6 a_{n-1}^3 a_{n} a_{n+1}b_{n-1}^6 b_{n}^{15} b_{n+1}^{13}, \\
a_{n+6} = a_{n-2}^{10} a_{n-1}^3 a_{n} b_{n-1}^{10} b_{n}^{25} b_{n+1}^{23}w_{n+2}p_4,
\ &b_{n+6} = a_{n-2}^{13} a_{n-1}^6 a_{n}^3 a_{n+1} b_{n-2}b_{n-1}^{13} b_{n}^{32} b_{n+1}^{28} w_{n+2},
\end{align*}
where $p_1,p_2,p_3,p_4$ are irreducible polynomials in $\{a_{n+i},b_{n+i}\}_{i=-2}^1$,
$\alpha$ and $\beta$. Thus, considering a lower bound  for the multiplicity of $w_{k}$ $(k>4)$ in the
sequences $\{a_{n}\}$ and $\{b_{n}\}$, we  solve (\ref{m2}) and (\ref{m3}) with  the following initial values:
$\m_{k+i}^{a}=\m_{k+i}^{b}=0$,   where  $ i \in \{1,2,3\}$ and $\m_{k+4}^{a}=\m_{k+4}^{b}=1$. We find, for
$n \geqslant k+1$, 
\[
\m_{n}^{a}(w_{k})= \m_{n}^{b}(w_{k})=\m_{n-k},
\]  where the integer sequence $\{\m_{n}\}$ is defined by 
\begin{align*}
\m_{n+2}=\m_{n+1}+2\m_{n}+\m_{n-1}+\m_{n-2},
\end{align*}
and $\m_{1}=\m_{2}=\m_{3}=\m_{4}-1=0$. We define sequences $\{\m_{n}^{a}(c_{i})\}$ and $\{\m_{n}^{b}(c_{i})\}$, for $i\in\{1,2,3,4\}$, by (\ref{m2}) and (\ref{m3}) and  the initial values $\{\m_{j}^{a}(c_{i})=\delta_{ij},\m_{j}^{b}(c_{i})=0\}_{i,j=1}^4$. Next, polynomials $c_{n}$ are defined as a quotient of $a_{n}$, with $n>4$, by
\begin{equation}\label{kh3}
a_{n}
=(\prod_{i=1}^{4}c_{i}^{\m^{a}_{n}(c_{i})})(\prod_{i=5}^{n-1}c_{i}^{\m_{n-i}})c_{n},
\end{equation}
and $b_{n}$ can be expressed as
\begin{align*}
b_{n}=(\prod_{i=1}^{4}c_{i}^{\m^{b}_{n}(c_{i})})\prod_{i=5}^{n-1}c_{i}^{\m_{n-i}}.
\end{align*}
Note that
\begin{align}\label{kh4}
g_{n}=(\prod_{i=1}^{4}c_{i}^{min(\m^{a}_{n}(c_{i}),\m^{b}_{n}(c_{i}))})\prod_{i=5}^{n-1}c_{i}^{\m_{n-i}} \text{ and } 
\frac{b_{n}}{g_{n}}=\prod_{i=1}^{4}c_{i}^{\m^{b}_{n}(c_{i})-min(\m^{a}_{n}(c_{i}),\m^{b}_{n}(c_{i}))},
\end{align} which shows that Somos-4 possesses the Laurent property.
We next study the multiplicities of the divisors $\{c_i\}_{i=1}^4$. But first, let us define the ultra-discrete
Somos-4 recurrence
\begin{equation}\label{X1}
r_{n+4}=-r_{n}+max(r_{n+3}+r_{n+1},2r_{n+2}),
\end{equation}
for which we take initial values $r_{1}=-1$, $r_{2}=r_{3}=r_{4}=0$, cf. \cite[Example 3.6]{For}.
The second difference of this sequence,
\begin{equation} \label{sd}
x_{k}=r_{k+2}-2r_{k+1}+r_{k},
\end{equation}
is a periodic sequence of order $8$.
This follows by iteration of $x_{k+2}+2x_{k+1}+x_{k}=max(x_{k+1},0)$, which itself is an ultra-discrete
version of the QRT-map (\ref{fqrt}), cf. \cite{Nob}.

\begin{lemma}
For all $1 \leqslant i \leqslant 4$, we have:   
$\m_{n}^{b}(c_{i})- \m_{n}^{a}(c_{i})=r_{n-i+1}
$.
\end{lemma}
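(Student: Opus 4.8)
The plan is to fix $i\in\{1,2,3,4\}$, introduce the difference $D_n:=\m_n^b(c_i)-\m_n^a(c_i)$, and prove that $\{D_n\}$ and the shifted sequence $\{r_{n-i+1}\}$ satisfy the same recurrence and agree on their first four terms; a routine induction then forces $D_n=r_{n-i+1}$ for all $n$, which is the claim.

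First I would extract a closed recurrence for $D_n$ from the ultra-discrete system (\ref{m2}), (\ref{m3}). The key observation is the order-reversing identity $b-\min(x,y)=\max(b-x,b-y)$, which holds with no case distinction, so subtracting the linear relation (\ref{m3}) for $\m_{n+2}^b$ against the $\min$ in (\ref{m2}) automatically converts it into a $\max$. Substituting (\ref{m3}) into both arguments and cancelling, all $c_i$-data reorganise into differences: the first argument collapses to $D_{n+1}+D_{n-1}-D_{n-2}$ and the second to $2D_n-D_{n-2}$, giving
\begin{equation*}
D_{n+2}=-D_{n-2}+\max(D_{n+1}+D_{n-1},\,2D_n).
\end{equation*}
This is exactly the ultra-discrete Somos-4 recurrence (\ref{X1}) after the shift $n\mapsto n-2$. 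Since (\ref{X1}) is autonomous, every shift of a solution is again a solution, so $R_n:=r_{n-i+1}$ obeys the identical recurrence; the two sequences can therefore differ only through their initial data.

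It remains to match four initial values. From $\m_j^a(c_i)=\delta_{ij}$ and $\m_j^b(c_i)=0$ one reads $D_j=-\delta_{ij}$ for $j\in\{1,2,3,4\}$. On the other side $R_j=r_{j-i+1}$ needs the values of $r$ at indices ranging over $\{-2,\dots,4\}$, so I would first extend $r$ to nonpositive indices by running (\ref{X1}) backwards, $r_m=-r_{m+4}+\max(r_{m+3}+r_{m+1},\,2r_{m+2})$; with $r_1=-1$ and $r_2=r_3=r_4=0$ this yields $r_0=r_{-1}=r_{-2}=0$. Hence $r_k=-\delta_{k,1}$ throughout $\{-2,\dots,4\}$, so $r_{j-i+1}=-\delta_{ij}=D_j$ for every $j\in\{1,2,3,4\}$. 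As both $\{D_n\}$ and $\{R_n\}$ satisfy one fourth-order recurrence and coincide on four consecutive terms, they agree for all $n$.

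The argument is essentially mechanical, and I do not anticipate a genuine obstacle. The only points requiring attention are the bookkeeping that makes the $\max$-recurrence close up in the $D_n$ alone --- one must check that every term of (\ref{m2}), (\ref{m3}) not of the form $D_\bullet$ cancels --- and the short backward extension of $r$ to the three nonpositive indices, without which the base case of the induction cannot even be stated. Once the min-to-max duality is applied the coincidence of the two recurrences is forced, and the remaining verification is finite.
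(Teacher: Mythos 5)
Your proof is correct and follows essentially the same route as the paper's: both subtract the linear relation (\ref{m3}) from the $\min$ in (\ref{m2}) to turn it into a $\max$, recognize the resulting recurrence for the difference as the ultra-discrete Somos-4 recurrence (\ref{X1}), and finish by matching four initial values. The only cosmetic difference is in handling $i>1$: the paper reduces to $i=1$ via the shift identity $\m_{n}^{a,b}(c_{i})=\m_{n-1}^{a,b}(c_{i-1})$, whereas you extend $r$ to the indices $-2,\ldots,0$ by running (\ref{X1}) backwards; both yield the same verification $r_{j-i+1}=-\delta_{ij}$ on the base cases.
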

\begin{proof}
It is enough to proof the lemma for $i=1$, because $ \m_{n}^{a}(c_{i})= \m_{n-1}^{a}(c_{i-1})$ and  $\m_{n}^{b}(c_{i})=\m_{n-1}^{b}(c_{i-1})$ for $i=\{2,3,4\}$ and $n>1$. For brevity we omit the dependence of $\m_{k}^{a}$ and $\m_{k}^{b}$ on $c_{1}$. From initial values, we see that 
$\m_{n}^{b}- \m_{n}^{a}=r_{n}$ for $1 \leqslant n \leqslant 4$. We suppose the lemma is true for all $n \leqslant k$ and demonstrate it is then also true in the case $n=k+1$. We have
\begin{align}
 \m_{k+1}^{a} &= min(\m_{k}^{a} + 2\m_{k-1}^{b}+\m_{k-2}^{a} + \m_{n-3}^{b};\m_{k}^{b}+ 2\m_{k-1}^{a}+\m_{k-2}^{b} +\m_{n-3}^{b}),\label{kh}\\
\m_{k+1}^{b} &= \m_{k}^{b}+ 2\m_{k-1}^{b}+\m_{k-2}^{b} + \m_{k-3}^{a}\label{khe}.
\end{align}
According to the hypothesis, we may replace $\m_{l}^{a}=\m_{l}^{b}-r_{l}$ in the right hand sides of (\ref{kh}) and (\ref{khe}). This gives 
 \begin{align*}
 \m_{k+1}^{a} &= min(\m_{k}^{b} + 2\m_{k-1}^{b}+\m_{k-2}^{b} + \m_{k-3}^{b}-r_{k}-r_{k-2};\m_{k}^{b}+ 2\m_{k-1}^{b}+\m_{k-2}^{b} +\m_{k-3}^{b}-2r_{k-1})\\
&= \m_{k}^{b} + 2\m_{k-1}^{b}+\m_{k-2}^{b} + \m_{k-3}^{b} +min(-(r_{k}+r_{k-2}),-2r_{k-1})\\
&=\m_{k}^{b} + 2\m_{k-1}^{b}+\m_{k-2}^{b} + \m_{k-3}^{b} - max(r_{k}+r_{k-2},2r_{k-1})\\
\m_{k+1}^{b} &= \m_{k}^{b}+ 2\m_{k-1}^{b}+\m_{k-2}^{b} + \m_{k-3}^{b}-r_{k-3}.
\end{align*} Thus,
$
\m_{k+1}^{b}-\m_{k+1}^{a}=-r_{k-3}+ max((r_{k}+r_{k-2}),2r_{k-1})=r_{k+1}.
$
\end{proof}
\begin{theorem}
For all $n>4$, the polynomials $c_{i}$ defined by (\ref{kh3}) satisfy the Somos-4 recurrence
\begin{align}\label{khe1}
c_{n-2}c_{n+2}=\alpha_{n}^{\tau}c_{n+1}c_{n-1}+\beta_{n}^{\tau}c_{n}^{2},
\end{align} with periodic coefficients,
\begin{equation}
\alpha_{n}^{\tau}=\alpha (\prod_{i=1}^{4}\tau_{i}^{p_{n-i \mod 8}})
\end{equation} 
and 
\begin{equation}
\beta_{n}^{\tau}=\beta (\prod_{i=1}^{4}\tau_{i}^{q_{n-i \mod 8}}),
\end{equation} with  $p$  and $q$ as given in Corollary 2, and
initial values $\{c_{i}=1\}_{i=1}^{4}$.
\end{theorem}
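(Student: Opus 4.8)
The plan is to follow the same route as the proof of the Somos-4 recurrence obtained from the QRT map in Section \ref{SQRT} (the Theorem establishing (\ref{X2})). The starting point is the exact relation $a_{n+2}=w_{n+2}b_{n-2}$ from (\ref{mab}), with $w_{n+2}=\alpha a_{n+1}b_n^2a_{n-1}+\beta b_{n+1}a_n^2b_{n-1}$. Writing each numerator as $a_m=A_mc_m$ and each denominator as $b_m=B_m$, where $A_m=(\prod_{i=1}^4c_i^{\m_m^a(c_i)})\prod_{i=5}^{m-1}c_i^{\m_{m-i}}$ and $B_m$ are the monomial prefactors read off from (\ref{kh3}) and the companion formula for $b_n$, the bilinear form becomes
\[
w_{n+2}=\alpha\,(A_{n+1}B_n^2A_{n-1})\,c_{n+1}c_{n-1}+\beta\,(B_{n+1}A_n^2B_{n-1})\,c_n^2 ,
\]
so that the leading divisors $c_{n+1},c_{n-1}$ sit in the first term and $c_n^2$ in the second. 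First I would extract the monomial $D=\gcd(P_1,P_2)$ of the two prefactors $P_1=A_{n+1}B_n^2A_{n-1}$ and $P_2=B_{n+1}A_n^2B_{n-1}$, so that $w_{n+2}=D\big(\alpha(P_1/D)c_{n+1}c_{n-1}+\beta(P_2/D)c_n^2\big)$, and set $\alpha_n^\tau:=\alpha(P_1/D)$, $\beta_n^\tau:=\beta(P_2/D)$.

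The next step is to show these coefficients are the stated monomials. For a middle divisor $c_j$ with $j\geqslant 5$ one has $\m_m^a(c_j)=\m_m^b(c_j)$, so $c_j$ enters $P_1$ and $P_2$ with equal exponent and cancels in $P_1/D$ and $P_2/D$; hence only $c_1,\dots,c_4$ survive. For these the exponents of $c_i$ in $P_1$ and $P_2$ are $T_i^1=\m_{n+1}^a(c_i)+2\m_n^b(c_i)+\m_{n-1}^a(c_i)$ and $T_i^2=\m_{n+1}^b(c_i)+2\m_n^a(c_i)+\m_{n-1}^b(c_i)$, and by the preceding Lemma, $\m_m^b(c_i)-\m_m^a(c_i)=r_{m-i+1}$, so
\[
T_i^1-T_i^2=-\big(r_{n-i+2}-2r_{n-i+1}+r_{n-i}\big)=-x_{n-i},
\]
with $x$ the periodic second difference (\ref{sd}). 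Thus the exponent of $c_i$ in $P_1/D$ is $\max(0,-x_{n-i})$ and in $P_2/D$ is $\max(0,x_{n-i})$. Iterating (\ref{X1}) to obtain the period-$8$ sequence $x=[-1,0,1,-1,1,0,-1,2]$ and checking $\max(0,-x_k)=p_k$ and $\max(0,x_k)=q_k$ then identifies $\alpha_n^\tau=\alpha\prod_{i=1}^4\tau_i^{p_{n-i \mod 8}}$ and $\beta_n^\tau=\beta\prod_{i=1}^4\tau_i^{q_{n-i \mod 8}}$, using $c_i=\tau_i$ for $i\leqslant 4$.

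It remains to verify that the common factor cancels \emph{exactly}, i.e. that $c_{n-2}c_{n+2}=\alpha_n^\tau c_{n+1}c_{n-1}+\beta_n^\tau c_n^2$ and not merely a scalar multiple of it. Since $a_{n+2}=A_{n+2}c_{n+2}$ by (\ref{kh3}) and $a_{n+2}=w_{n+2}b_{n-2}=D\,b_{n-2}\big(\alpha_n^\tau c_{n+1}c_{n-1}+\beta_n^\tau c_n^2\big)$, this reduces to the single monomial identity $A_{n+2}=c_{n-2}\,b_{n-2}\,D$, the analogue of the condition $Z_n=1$ in Section \ref{SQRT}. Comparing exponents divisor by divisor: for $c_i$ with $i\leqslant 4$ this is precisely the ultra-discrete recurrence (\ref{m2}), whose additive $\m_{n-2}^b$ term supplies the factor $b_{n-2}$; for a middle divisor $c_j$ with $5\leqslant j\leqslant n-3$ it is the defining recurrence $\m_{m+2}=\m_{m+1}+2\m_m+\m_{m-1}+\m_{m-2}$ with $m=n-j$; and the lone explicit factor $c_{n-2}$ is pinned down by the boundary case $j=n-2$, where $\m_1=\m_2=\m_3=0$, $\m_4=1$ force exponent $1$. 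I expect this bookkeeping — matching the product ranges in (\ref{kh3}) and in the $b_n$-formula at the boundaries, together with the interface between the four special divisors and the generic ones — to be the main obstacle; it is the place where the consistency of the factorisation scheme, equivalently the integrability condition (\ref{HWC}), is really being exploited.

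Finally, the small cases $n=5,6$, where (\ref{kh3}) does not yet govern all of $c_{n-2},\dots,c_{n+2}$, would be checked directly from (\ref{mab})--(\ref{maba}) by computing $c_5,c_6$ explicitly, as the analogues of (\ref{XV1}). To pass from this divisor presentation, in which $c_i=\tau_i$, to the statement with initial values $\{c_i=1\}_{i=1}^4$ and the $\tau$-dependence carried entirely by the coefficients, I would invoke the homogeneity argument already used after the Theorem of Section \ref{SQRT}: resetting $c_i=1$ and absorbing the monomials $\prod_{i=1}^4\tau_i^{p_{n-i \mod 8}}$ and $\prod_{i=1}^4\tau_i^{q_{n-i \mod 8}}$ into $\alpha_n^\tau,\beta_n^\tau$ reproduces the same polynomial sequence $\{c_n\}_{n>4}$.
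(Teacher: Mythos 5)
Your proposal is correct and follows essentially the same route as the paper: both substitute the recursive factorisation into the homogenised system, invoke the lemma $\m_n^b(c_i)-\m_n^a(c_i)=r_{n-i+1}$, and identify the surviving exponents of $c_1,\dots,c_4$ with $p,q$ via the periodic second difference $x_k$ of the ultra-discrete Somos-4 sequence. The only organisational difference is that the paper eliminates $g_{n+2}$ between (\ref{mab}) and (\ref{maba}), which disposes of the middle divisors wholesale, whereas you verify the residual monomial identity $A_{n+2}=c_{n-2}\,b_{n-2}\,D$ exponent-by-exponent against (\ref{m2}) and the $\m$-recurrence; the two come to the same thing.
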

 \begin{proof}
From (\ref{kh3}), (\ref{kh4}) and the initial values we obtain
\begin{align}\label{Somo1}
 c_{5}&=\alpha c_{2}c_{4}- \beta c_{3},\quad c_{6}=\alpha c_{3}c_{5}+\beta c_{1}c_{4}
,\quad c_{7}=\alpha c_{1}c_{4}c_{6}+\beta c_{2}c_{5},\quad \text{ and }
c_{8}&= \alpha c_{1}c_{4}c_{6}+\beta c_{2}c_{5},
\end{align} 
Using Lemma 5, we find $a_{n}=g_{n}c_{n}$ and
$b_{n}=(\prod_{i=1}^{4}c_{i}^{r_{n-i+1}})g_{n}$.
Substituting these expressions in (\ref{mab}) gives, for $n>8$
\[
c_{n+2}g_{n+2}=g_{n-1}g_{n-2}g_{n}^{2}g_{n+1}(\alpha c_{n+1}c_{n-1} \prod_{i=1}^{4}c_{i}^{2r_{n-i+1}+r_{n-i-1}}+\beta c_{n}^{2}\prod_{i=1}^{4}c_{i}^{r_{n-i+2}+r_{n-i-1}+r_{n-i}}).
\]
From (\ref{maba}), we find:
\[
\prod_{i=1}^{4}c_{i}^{r_{n-i+3}}g_{n+2}=g_{n-1}g_{n-2}g_{n}^{2}g_{n+1}(\prod_{i=1}^{4}c_{i}^{r_{n-i+2}+2r_{n-i+1}+r_{n-i}})c_{n-2}.
\]
Eliminating $g_{n+2}$ from the above yields
\[
c_{n+2}c_{n-2}=\alpha(\prod_{i=1}^{4}c_{i}^{r_{n-i+3}-r_{n-i+2}-r_{n-i}+r_{n-i-1}})c_{n+1}c_{n-1}+\beta(\prod_{i=1}^{4}c_{i}^{r_{n-i+3}-2r_{n-i+1}+r_{n-i-1}})c_{n}^{2},
\]
which can be expressed in terms of $p$ and $q$
\[
r_{n-i+3}-r_{n-i+2}-r_{n-i}+r_{n-i-1}=x_{n-i+1}+x_{n-i}+x_{n-i-1}=p_{n-i \mod 8},
\]
and 
\[
r_{n-i+3}-2r_{n-i+1}+r_{n-i-1}=x_{n-i+1}+2x_{n-i}+x_{n-i-1}=q_{n-i \mod 8}.
\]
Taking unit initial values, the relations (\ref{Somo1}) are generated by (\ref{khe1}). 
\end{proof}

\subsection{Periodic Somos-5}
For Somos-5 we follow the same steps. Homogenising $\sigma_{n}=a_{n}/b_{n}$ gives
\begin{align}\label{map3}
a_{n+3}=&w_{n+3}b_{n-2},\\
b_{n+3}=&b_{n+2}b_{n-1}b_{n}b_{n+1}a_{n-2},\label{map4}
\end{align} where $w_{n+3}:=\gamma a_{n+2}a_{n-1}b_{n}b_{n+1}+\delta a_{n}a_{n+1}b_{n+2}b_{n-1}$.
We take $\{b_{i}=1\}_{i=1}^5$ and so $\{a_{i}=\sigma_{i}\}_{i=1}^5$. Iterating (\ref{map3}) and (\ref{map4}) five more times, we find:
\begin{align*}
a_{n+4} = b_{n+2}b_{n+1}b_{n-1}q_{1},
\ &b_{n+4}=a_{n-2}a_{n-1}b_{n-1}b_n^2 b_{n+1}^2b_{n+2}^2, \\
a_{n+5} =a_{n-2}b_{n-1}b_n^2 b_{n+1}^2b_{n+2}^2 q_2,
\ &b_{n+5} = a_{n-2}^2 a_{n-1}a_{n}b_{n-1}^2b_{n}^3 b_{n+1}^4 b_{n+2}^4, \\
a_{n+6} =a_{n-2}^2a_{n-1}b_{n-1}^3b_{n}^3 b_{n+1}^6 b_{n+2}^5 q_3,
\ &b_{n+6} = a_{n-2}^4 a_{n-1}^2a_{n} a_{n+1}b_{n-1}^4b_{n}^{6}b_{n+1}^7  b_{n+2}^{8}, \\
a_{n+7} =a_{n-2}^{5} a_{n-1}^{2}a_{n}b_{n-1}^6 b_{n}^{8} b_{n+1}^{11} b_{n+2}^{12}q_4,
\ &b_{n+7} = a_{n-2}^{8} a_{n-1}^4 a_{n}^2 a_{n+1}a_{n+2}b_{n}^{12} b_{n+1}^{14} b_{n+2}^{15},\\
a_{n+8} =a_{n-2}^{12}a_{n-1}^5 a_{n}^2 a_{n+1}b_{n-1}^{14}b_{n}^{18} b_{n+1}^{24} b_{n+2}^{25} w_{n+3}q_5,
\ &b_{n+8} =  a_{n-2}^{15}a_{n-1}^8 a_{n}^4 a_{n+1}^2 a_{n+2}b_{n-2} b_{n-1}^{15}b_{n}^{23}b_{n+1}^{27}b_{n+2}^{29}w_{n+3},
\end{align*}
where $\{q_{i}\}_{i=1}^{5}$ are irreducible polynomials in $\{a_{n+i},b_{n+i}\}_{i=-2}^2$, $\delta$ and $\gamma$. From (\ref{map3}) and (\ref{map4}), the system that gives a lower bound for multiplicities is:
\begin{align}\label{map5}
 \m_{n+3}^{a} &= min(\m_{n+2}^{a}+\m_{n-1}^{a}+\m_{n}^{b}+\m^{b}_{n+1}+\m^{b}_{n-2};\m_{n}^{a}+\m_{n+1}^{a} +\m_{n+2}^{b}+\m_{n-1}^{b}+\m_{n-2}^{b}),\\
\m_{n+3}^{b} &= \m_{n+2}^{b}+ \m_{n-1}^{b}+\m_{n}^{b}+\m_{n+1}^{b}+ \m_{n-2}^{a}.\label{map6}
\end{align}
To obtain a lower bound for $\m^{a}_{n}(w_{k})$ and $\m^{b}_{n}(w_{k})$,  we solve (\ref{map5}) and (\ref{map6}) with the following initial values: $ \m_{k+i}^{a}=\m_{k+i}^{b}=0$  for all $i\in \{1,2,3,4\}$ and $ \m_{k+5}^{a}=\m_{k+5}^{b}=1$. We find,  for all $n \geqslant k+1,$ 
\begin{align*}
\m_{n}^{a}(w_{k})= \m_{n}^{b}(w_{k})=\m_{n-k},
\end{align*} where $\m_{n+4}=\m_{n+3}+\m_{n+2}+\m_{n+1}+\m_{n}+\m_{n-1}$ and $\m_{1}=\m_{2}=\m_{3}=\m_{4}=\m_{5}-1=0$. Then, the formulae for $a_n$ and $b_n$ in terms of a new sequence $\{d_k\}_{k=1}^\infty$ are given as follows, with $n>5$, 
\begin{equation}\label{LP1}
a_{n}=(\prod_{i=1}^{5}d_{i}^{\m^{a}_{n}(d_{i})})(\prod_{i=6}^{n-1}d_{i}^{\m_{n-i}})d_{n},\quad 
b_{n}=(\prod_{i=1}^{5}d_{i}^{\m^{b}_{n}(d_{i})})\prod_{i=6}^{n-1}d_{i}^{\m_{n-i}},
\end{equation}
where sequences $\{\m_{n}^{a}(d_{i\leq5})\}$ and $\{\m_{n}^{b}(d_{i\leq5})\}$ are defined by (\ref{map5}) and (\ref{map6}) and  the initial values $\{\m_{j}^{a}(d_{i})=\delta_{ij},\m_{j}^{b}(d_{i})=0\}_{i,j=1}^5$. These formulae clearly show that Somos-5 possesses the Laurent property.
The differences between the multiplicities of $c_{i\leq5}$ can be expressed in terms of the ultra-discrete
Somos-5 sequence defined by
\begin{equation}\label{XXX}
t_{n+5}=-t_{n}+max(t_{n+4}+t_{n+1},t_{n+3}+t_{n+2}),
\end{equation}
and initial values $t_{1}=-1$, $\{t_{i}=0\}_{i=2}^{5}$. The quantity
\begin{align}\label{XX1}
y_{k+1}=t_{k+3}-t_{k+2}-t_{k+1}+t_{k},
\end{align}
which relates to (\ref{hs}), satisfies the ultra-discrete QRT-map (related to \ref{hqrt}),
\begin{align*}
y_{k+2}+y_{k+1}+y_{k}=max(y_{k+1},0),
\end{align*} and is periodic of order 7. It follows from (\ref{map5}) and (\ref{map6})
that
\begin{equation} \label{mds}
\m_{n}^{b}(d_{i})- \m_{n}^{a}(d_{i})=t_{n-i+1},
\end{equation}
for all $1 \leqslant i \leqslant 5$.
\begin{theorem} 
Let $r,s$ be given as in (\ref{map39}). For all  $n >7$, the polynomials $d_{n>5}$ defined by (\ref{LP1}), satisfy the Somos-5 recurrence with periodic coefficients 
\begin{equation} \label{PS5T}
d_{n+3}d_{n-2}=\gamma_{n}
d_{n+2}d_{n-1}+\delta_{n}
d_{n}d_{n+1},
\end{equation} where
\begin{equation}
\gamma_n
=\gamma(\prod_{i=1}^{5}\sigma_{i}^{r_{n-i \mod 7}}),\quad
\delta_n
=\delta (\prod_{i=1}^{5}\sigma_{i}^{s_{n-i \mod 7}}),
\end{equation}
and initial values $\{d_i=1\}_{i=1}^5$. 
\end{theorem}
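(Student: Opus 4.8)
The plan closely parallels the proof of Theorem 6 (periodic Somos-4), with the ultra-discrete Somos-4 data replaced throughout by its Somos-5 counterpart: the multiplicity relation (\ref{mds}) plays the role of Lemma 5, the factorisations (\ref{LP1}) replace (\ref{kh3})--(\ref{kh4}), and the order-$7$ periodic sequence $y_k$ of (\ref{XX1}) replaces the second difference used for Somos-4. First I would record that the divisors $d_i$ with $i>5$ enter both $a_n$ and $b_n$ with the same exponent $\m_{n-i}$, so that the common factor
\[
g_n=\Big(\prod_{i=1}^{5}d_i^{\m_n^{a}(d_i)}\Big)\prod_{i=6}^{n-1}d_i^{\m_{n-i}}
\]
satisfies the polynomial identities $a_n=g_n d_n$ and $b_n=g_n\prod_{i=1}^{5}d_i^{t_{n-i+1}}$. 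The second identity is legitimate because $\m_n^{a}(d_i)+t_{n-i+1}=\m_n^{b}(d_i)\ge0$ by (\ref{mds}), and since $t_k\ge0$ for $k\ge2$ the factor $g_n$ genuinely divides both $a_n$ and $b_n$.

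The substitution step is then mechanical. Feeding $a_n=g_n d_n$ and $b_n=g_n\prod_i d_i^{t_{n-i+1}}$ into the numerator recurrence (\ref{map3}), both terms of $w_{n+3}$ acquire the common factor $g_{n-1}g_n g_{n+1}g_{n+2}$, and multiplying by $b_{n-2}$ exhibits $a_{n+3}=g_{n+3}d_{n+3}$ as $g_{n-2}g_{n-1}g_n g_{n+1}g_{n+2}$ times a bracket $\gamma\,d_{n+2}d_{n-1}\prod_i d_i^{(\cdots)}+\delta\,d_n d_{n+1}\prod_i d_i^{(\cdots)}$ carrying explicit $t$-exponents. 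Independently, substituting into the denominator recurrence (\ref{map4}) and solving for $g_{n+3}$ gives $g_{n+3}=g_{n-2}g_{n-1}g_n g_{n+1}g_{n+2}\,d_{n-2}\prod_i d_i^{(\cdots)}$. This is exactly where the hypothesis $n>7$ enters: the smallest index occurring is $n-2$, and (\ref{LP1}) holds in its generic form only for indices exceeding $5$.

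Eliminating $g_{n+3}$ between the two outputs, the fivefold product $g_{n-2}g_{n-1}g_n g_{n+1}g_{n+2}$ cancels identically, leaving a relation of the shape
\[
d_{n-2}d_{n+3}\prod_i d_i^{A_i}=\gamma\,d_{n+2}d_{n-1}\prod_i d_i^{B_i}+\delta\,d_n d_{n+1}\prod_i d_i^{C_i}.
\]
Dividing through by $\prod_i d_i^{A_i}$ yields (\ref{PS5T}) with $\gamma_n=\gamma\prod_i d_i^{B_i-A_i}$ and $\delta_n=\delta\prod_i d_i^{C_i-A_i}$; since $d_i=\sigma_i$ for $i\le5$, these are precisely the products over $\sigma_i$ claimed in the theorem, once the exponents are identified.

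The one genuinely delicate point is that final exponent identification. After collecting terms the differences should telescope into sums of the quantity $y_k$ of (\ref{XX1}), and I expect
\[
B_i-A_i=y_{n-i}+y_{n-i+2},\qquad C_i-A_i=y_{n-i}+y_{n-i+1}+y_{n-i+2}.
\]
Because $y$ is periodic of order $7$ with explicit values, checking these residue by residue shows that they are non-negative and equal the entries $r_{n-i\bmod7}$ and $s_{n-i\bmod7}$ of (\ref{map39}). This is the precise analogue of the identities $x_{n-i+1}+x_{n-i}+x_{n-i-1}=p_{n-i\bmod8}$ used for Somos-4, and---although routine---it is the step most prone to indexing errors, so I regard it as the principal obstacle. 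It then remains to compute $d_6,d_7$ directly from (\ref{LP1}) and to confirm that iterating (\ref{PS5T}) from the unit initial values $d_1=\dots=d_5=1$ reproduces the low-order terms, closing the argument exactly as in the periodic Somos-4 case.
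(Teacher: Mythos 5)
Your proposal follows the paper's proof essentially verbatim: the same substitution $a_n=g_nd_n$, $b_n=g_n\prod_{i=1}^{5}d_i^{t_{n-i+1}}$ from (\ref{mds}) into (\ref{map3}) and (\ref{map4}), the same elimination of $g_{n+3}$ with cancellation of $g_{n-2}g_{n-1}g_ng_{n+1}g_{n+2}$, the same reduction of the exponents to sums of the period-$7$ sequence $y_k$ matched against $r$ and $s$, and the same final check of the low-order terms from unit initial values. Your exponent identifications $B_i-A_i=y_{n-i}+y_{n-i+2}$ and $C_i-A_i=y_{n-i}+y_{n-i+1}+y_{n-i+2}$ are in fact the correct direct expansions via (\ref{XX1}) (the paper's stated $y_{n-i+1}+y_{n-i-1}$ and $y_{n-i+1}+y_{n-i}+y_{n-i-1}$ carry an immaterial index shift absorbed by the period-$7$ check), so the argument is sound and not a genuinely different route.
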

\begin{proof}
Using (\ref{map3}), (\ref{map4}), initial values and (\ref{LP1}), we find 
\begin{align}\label{Pet1}
d_{6}&=\alpha \sigma_{5}\sigma_{2}+\beta \sigma_{3}\sigma_{4},\quad
d_{7}=\alpha d_{6}\sigma_{3}+\beta \sigma_{4}\sigma_{5}\sigma_{1},\quad 
d_{8}=\alpha d_{7}\sigma_{4}+\beta \sigma_{2}d_{6}\sigma_{5},\quad \\\notag
d_{9}&=\alpha d_{8}\sigma_{5}\sigma_{1}+\beta \sigma_{3}d_{7}d_{6},\quad \text{ and }
d_{10}=\alpha d_{9}d_{6}\sigma_{2}+\beta \sigma_{1}\sigma_{4}d_{8}d_{7}.
\end{align}
For all $n>10$, from (\ref{LP1}) and (\ref{mds}), we find
$a_{n}=g_{n}d_{n}$ and $b_{n}=(\prod_{i=1}^{5}d_{i}^{t_{n-i+1}})g_{n}$.
Substitution in equation (\ref{map3}) gives
\begin{align*}
d_{n+3}g_{n+3}=g_{n-2}g_{n-1}g_{n}g_{n+1}g_{n+2}(&\gamma d_{n+2}c_{n-1}\prod_{i=1}^{5}d_{i}^{t_{n-i+1}+t_{n-i+2}+t_{n-i-1}}\\
&+\delta d_{n}d_{n+1}\prod_{i=1}^{5}d_{i}^{t_{n-i+3}+t_{n-i}+t_{n-i-1}}).
\end{align*}
From (\ref{map4}), we find:
\[
g_{n+3}\prod_{i=1}^{5}d_{i}^{t_{n-i+4}}=g_{n-2}g_{n-1}g_{n}g_{n+1}g_{n+2}(\prod_{i=1}^{5}d_{i}^{t_{n-i+3}+t_{n-i}+t_{n-i+1}+t_{n-i+2}})d_{n-2}.
\]
Eliminating $g_{n+3}$ from the above two equations yields
\begin{align*}
d_{n+3}d_{n-2}=&\gamma(\prod_{i=1}^{5}d_{i}^{t_{n-i+4}-t_{n-i+3}-t_{n-i}+t_{n-i-1}})d_{n+2}d_{n-1}\\
&+\delta (\prod_{i=1}^{5}d_{i}^{t_{n-i+4}-t_{n-i+1}-t_{n-i+2}+t_{n-i-1}})d_{n}d_{n+1},
\end{align*}
which can be expressed in terms of $r$ and $s$ as follows
\begin{align*}
t_{n-i+4}-t_{n-i+3}-t_{n-i}+t_{n-i-1}=y_{n-i+1}+y_{n-i-1}=r_{n-i \mod 7},
\end{align*}
and 
\begin{align*}
t_{n-i+4}-t_{n-i+1}-t_{n-i+2}+t_{n-i-1}=y_{n-i+1}+y_{n-i}+y_{n-i-1}=s
_{n-i \mod 7}.
\end{align*}
Taking $\{d_{i}=1\}_{1}^{5}$ , then (\ref{Pet1}) are generated by (\ref{PS5T}).
\end{proof}

\subsection{On the Laurent property of periodic Somos-4 \& 5 sequences} \label{subs}
As the periodic Somos-4 \& 5 sequences we have derived are special cases of
equation (\ref{HWE}) and the condition (\ref{HWC}) is satisfied, they possess the Laurent
property. 

If we would not have had the Hirota-Miwa equation at hand, or one wants
a direct proof this can be done. Actually, 
most of the work has been done already. Considering (\ref{khe1}),  the substitution $c_n=a_n/b_n$  yields the same system of recurrences (\ref{mab}), (\ref{maba}) for polynomials $a_n$ and $b_n$. The only difference is that in the expression for $w_{n+2}$, the $\alpha$ and $\beta$ are now periodic functions of $n$ with period 8.
This means that the iteration of the recurrences (four more times) has to be repeated for different values of $n\equiv i$ mod 8, with $i\in\{0,1,2,\ldots,7\}$. For each value of $i$ we found that $w_{n+2}$ does not divide $a_{n+k}$ or $b_{n+k}$, with $k=3,4,5$, and that it does divide both $a_{n+6}$ and $b_{n+6}$. As the system of recurrences is similar, the derived ultra-discrete system (\ref{m2}), (\ref{m3}) is the same, polynomials $c_n$ are defined by equation (\ref{kh3}), and the proof carries over. Also no surprises were found when iterating the system (\ref{map22},\ref{map23}) five more times, for $p=7$ different values for $n$ mod $p$.

\section{From DTKQ equations to sequences of polynomials which satisfy rational recurrences}  
The aim of this section is to study how the second and third order DTKQ equations give rise to recurrences that possess the Laurent property. The $N$th order DTKQ equation
\begin{equation}\label{D1}
\sum_{s=0}^{N}u_{n+s}\prod_{q=1}^{N-1}u_{n+q}=\alpha. 
\end{equation}
was derived in \cite{DTKQ}, through applying the principle of duality for difference equations, and it was shown
to admit sufficiently many integrals to be completely integrable. The growth of the equations has been studied in \cite{Ham}.

\subsection{From the second order DTKQ equation to a fifth order Laurent recurrence with four terms.}
In the case  $N=2$, the DTKQ equation is 
\begin{equation} \label{1}
u_{n+2} = \frac{\alpha}{u_{n+1}}-u_{n}-u_{n+1}.
\end{equation}
Substituting  $u_{n}=a_{n}/b_{n}$ in (\ref{1}) and identifying the numerators and denominators, we get a system of recurrences for polynomial sequences $\{a_{n}\}$ and $\{b_{n}\}$:
\begin{align}\label{1a}
 a_{n+2} &=\alpha b_{n}b_{n+1}^2-a_{n}a_{n+1}b_{n+1}-b_{n}a_{n+1}^{2},\\
 b_{n+2} &= a_{n+1}b_{n}b_{n+1},\label{1b}
\end{align} with $a_{1}=u_{1}$,  $a_{2}=u_{2}$, $b_{1}=b_2=1$.
Therefore, $a_{n}$ and $b_{n}$ are polynomials in the variables  $u_{1}$ and $u_{2}$. Via recursive factorisation, we find, see \cite{Ham}, in terms of a polynomial sequence $\{e_n\}$,
\begin{align} \label{1c}
a_{n}&=
\left\{
\begin{array}{ll}
e_{n}   & \text{ if } n\leqslant 3, \\
e_{n}e_{n-3}\prod_{i=2}^{n-3}e_{i}^{\m_{n-i-2}} & \text{ if } n > 3,
\end{array}
\right. \\
b _{n} &=\left\{
\begin{array}{ll}
1   & \text{ if } n\leqslant 2, \\
e_{2}   &  \text{ if } n=3, \\
e_{n-1}e_{n-2}\prod_{i=2}^{n-3}e_{i}^{\m_{n-i-2}} & \text{ if } n > 3,
\end{array} \right.\\
\end{align}
with $\m_{1}=2$, $\m_{2}=6$, and $\m_{l}=2\m_{l-1}+\m_{l-2}$.
\begin{theorem}
The polynomials $e_{n>3}$ satisfy
\begin{align}\label{a3}
\frac{e_{n-1}e_{n-5}}{e_{n-3}^{2}} + \frac{e_{n-1}^{2}e_{n-4}^{2}}{e_{n-3}^{2}e_{n-2}^{2}} + \frac{e_{n-4}e_{n}}{e_{n-2}^{2}} = \alpha,
\end{align}
where 
\begin{equation} \label{inva}
\{e_{i}=1\}_{i=-1}^{1},\quad e_{2}=u_{2} \quad \text { and } \quad e_{3}=\alpha - u_{1}u_{2}-u_{2}^{2}.
\end{equation}
\end{theorem}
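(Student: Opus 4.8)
The plan is to derive the recurrence (\ref{a3}) by eliminating the divisors $g_n=\gcd(a_n,b_n)$ from the homogenised system, exactly as was done in the Somos cases (Theorems~1 and~3). First I would substitute the factorisations (\ref{1c}) for $a_n$ and $b_n$ into the defining recurrences (\ref{1a}) and (\ref{1b}). The crucial structural observation from (\ref{1c}) is that, writing $a_n = e_n e_{n-3} G_n$ and $b_n = e_{n-1} e_{n-2} G_n$ with the common factor $G_n := \prod_{i=2}^{n-3} e_i^{\m_{n-i-2}}$, the quantity $G_n$ plays the role of $g_n$ in the Somos arguments: it is shared by numerator and denominator. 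Dividing the numerator recurrence (\ref{1a}) through by a suitable monomial in the $e_i$ should collapse the $G_n$-dependence, because the multiplicity exponents $\m_l$ satisfy the linear recurrence $\m_l = 2\m_{l-1} + \m_{l-2}$ and this is precisely the relation that makes the common factors cancel.

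The key computation is to insert the three monomials $a_{n+2}$, $a_n a_{n+1} b_{n+1}$, $b_n a_{n+1}^2$, and $\alpha b_n b_{n+1}^2$ from (\ref{1a}) in terms of the $e_i$, and then factor out the greatest common monomial. Concretely, I would express each of the four terms of (\ref{1a}) as $e$-monomials, verify that after dividing by their common factor the denominator recurrence (\ref{1b}) supplies exactly the relation needed to identify that common factor, and read off the remaining rational relation among the $e_i$. Shifting indices so that the relation is centred appropriately (the recurrence (\ref{a3}) is written with a shift that makes it symmetric-looking), one should recover
\begin{equation*}
\frac{e_{n-1}e_{n-5}}{e_{n-3}^{2}} + \frac{e_{n-1}^{2}e_{n-4}^{2}}{e_{n-3}^{2}e_{n-2}^{2}} + \frac{e_{n-4}e_{n}}{e_{n-2}^{2}} = \alpha.
\end{equation*}
The verification that the three displayed terms correspond respectively to $b_n a_{n+1}^2/(\text{common})$, $a_n a_{n+1} b_{n+1}/(\text{common})$, and $a_{n+2}/(\text{common})$ (up to the index shift) is the heart of the argument, and is entirely analogous to the elimination step in the proof of Theorem~1.

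The initial values (\ref{inva}) I would obtain by direct iteration of (\ref{1a}), (\ref{1b}) from $a_1=u_1$, $a_2=u_2$, $b_1=b_2=1$, extracting the first few $e_n$ from the factorisation (\ref{1c}): one reads off $e_2=a_2=u_2$ and then computes $e_3$ from $a_3$, giving $e_3=\alpha-u_1u_2-u_2^2$, while the padding values $e_{-1}=e_0=e_1=1$ are forced by the boundary conditions of the factorisation. The main obstacle I anticipate is bookkeeping: the exponent sequence $\m_l$ grows quickly and appears with shifted indices in both the numerator and denominator, so confirming that \emph{all} the $G_n$-type common factors cancel cleanly—leaving only the low-index $e_i$ with the exponents shown—requires careful tracking of the convolution structure of the exponents. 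The linear recurrence $\m_l=2\m_{l-1}+\m_{l-2}$ is what guarantees this cancellation, so the real content is checking that the shifts induced by (\ref{1a}) and (\ref{1b}) are compatible with that recurrence; once that is established, the rest is routine monomial algebra.
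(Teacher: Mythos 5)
Your proposal is correct and follows essentially the same route as the paper's proof: substitute $a_n=e_ne_{n-3}g_n$, $b_n=e_{n-1}e_{n-2}g_n$ into (\ref{1a}), use (\ref{1b}) to obtain $g_ng_{n+1}^2=g_{n+2}/(e_{n-1}^2e_{n-2}^2)$, eliminate the $g$'s, and fix the low-index values by direct iteration. The convolution bookkeeping you worry about is unnecessary — the relation supplied by (\ref{1b}) identifies the common factor without ever unpacking the exponents $\m_l$ — and your assignment of the first two terms of (\ref{a3}) to $b_na_{n+1}^2$ and $a_na_{n+1}b_{n+1}$ should be swapped, but both are details the "routine monomial algebra" would resolve.
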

We remark that a reduction of order, by introducing the variable $e_{n+1}/e_{n-1}$, is apparent, however, this does not preserve Laurentness. Furthermore we should mention that
the fact that the rational recurrence (\ref{a3}) with initial values (\ref{inva}) produces
a polynomial sequence does not follow from the Laurent property of (\ref{a3}). One needs a
strong Laurent property such as given in \cite{HS} for Somos sequences.
 
\begin{proof}
From (\ref{1c}), (\ref{1a}), (\ref{1b}) and  initial values, we obtain
\begin{align*}
e_{2}=a_{2}, \quad e_{3}=a_{3}=\alpha b_{1}{b_{2}}^2-a_{1}a_{2}b_{2}-b_{1}{a_{2}}^{2} = \alpha -u_{1}u_{2}-u_{2}^{2}.
\end{align*}
Similarly, we find
\begin{align}
e_{4}&=a_{4}
= \alpha u_{2}^{2} -e_{3}u_{2}^{2}-e_{3}^{3}, \label{c4}\\
e_{5}&=\frac{a_{5}}{e_{2}g_{5}}=\frac{ \alpha u_{2}^{2}e_{3}^{2}-e_{4}e_{3}^{2}-e_{4}^{2}}{u_{2}^{2}},\label{c5}\\
e_{6}&=\frac{a_{6}}{e_{3}g_{6}}=\frac{\alpha e_{3}^{2}e_{4}^{2}-e_{5}e_{4}^{2} -e_{5}^{2}u_{2}^{2}}{u_{2}e_{3}^{2}}.\label{c6}
\end{align}
Now consider, for $n >4$, replacing $a_{n+i}$  by $ e_{n+i}c_{n-3+i}g_{n+i}$ and  $b_{n+i}$  by $ e_{n-1+i}e_{n-2+i}g_{n+i}$  in the right hand side of equation (\ref{1a}):
\[
e_{n+2}e_{n-1}g_{n+2}
=g_{n}g_{n+1}^{2}e_{n-1}e_{n-2}(\alpha e_{n}^{2}e_{n-1}^{2}-e_{n}^{2}e_{n-3}e_{n+1}-e_{n+1}^{2}e_{n-2}^{2}).
\]
%
From equation (\ref{1b}) we find
$
g_{n}g_{n+1}^{2}=\dfrac{g_{n+2}}{e_{n-1}^{2}e_{n-2}^{2}}
$
and these combine to give the recurrence equation for $e's$, (\ref{a3}).
By taking $e_{-1}=e_{0}=e_{1}=1$, the recurrence equation (\ref{a3}) generates the above expressions (\ref{c4}), (\ref{c5}) and (\ref{c6})
\end{proof}

We could now recursively factorize the equation (\ref{a3}), but if
one just wants to verify the Laurent property there is an easier method,
as described in \cite{Rob}. By iterating the map five times we obtain
$\{e_n=p_n/q_n\}_{n=5}^{10}$, for polynomials $p_n$ and monomials $q_n$ in the initial values $\{e_n\}_{n=1}^5$. As $p_{5}$ is prime to $p_{n}$ for all $n \in \{6,7,8,9,10\}$ the recurrence (\ref{a3}) satisfies the Laurent property. 

\subsection{From the third order DTKQ equation to a sixth order Laurent recurrence with five terms, with coefficients that are periodic with period 8.}
Taking $N=3$ in equation (\ref{D1}), this gives the third order DTKQ equation, 
 \begin{align*}
   u_{n+3} = \dfrac{\alpha}{u_{n+1}u_{n+2}} - u_{n}  - u_{n+1} - u_{n+2},
  \end{align*} and  homogenising yields,
\begin{align}\label{equ1}
a _{n+3}&= \alpha b_{n+1}^{2}b_{n+2}^{2}b_{n} - a_{n+1}a_{n+2}a_{n}b_{n+1}b_{n+2} - a_{n+1}^{2}a_{n+2}b_{n+2}b_{n} - a_{n+1}a_{n+1}^{2}b_{n+1}b_{n},\\                                            
 b _{n+3}& = b _{n+1}b _{n+2}b _{n}a_{n+1}a_{n+2}\label{equ2}.
 \end{align} If we choose
$\{a_{i}=u_{i},b _{i} = 1\}_{i=1}^3$ then $a _{l}$ and $b _{l} $ are polynomials in the initial variables $u_{1}$, $u_{2}$, $u_{3}$ and parameter $ \alpha $.
A sequence of polynomials $\{z_{l}\}$ is defined by:
 \begin{align}
  a_{n}&=
\left\{
\begin{array}{ll}
z_{n}   & \text{ if } n<  5, \\
z_{3}z_{5}  & \text{ if } n=  5, \\
 z_{2}^{m_{n}^{a}(z_{2})}\prod_{i=3}^{n-3}z_{i}^{\m_{n-i-2}}z_{l-3}z_{l-2}z_{l}
&\text{ if } n > 5,
\end{array}
\right.\label{equ4} \\
\quad b_{n} &= \left\{
\begin{array}{ll}
1   & \text{ if } n< 4, \\
z_{2}z_{3}   & \text{ if } n=4, \\
z_{2}^{m_{l}^{b}(z_{2})}\prod_{i=3}^{n-3}z_{i}^{\m_{n-i-2}}z_{n-2}^{2}z_{n-1}\ \ \ 
& \text{ if } n > 4,
\end{array}
\right.\label{equ5} 
\end{align}
where $\m$ is the integer sequence defined by $\m_{1}=4$, $\m_{2}=13$, $\m_{3}=37$  and
$
\m_{n}=2\m_{n-1}+2\m_{n-2}+\m_{n-3}$.
In this case the ultra-discretisation of the homogenised system does not give us a sharp bound on the multiplicities $m_{n}^{a}(z_{2})$ and $m_{n}^{b}(z_{2})$. By using prime numbers as initial values
we were able to iterate the map  (\ref{equ1},\ref{equ2}) a little further than usual and thus observe the following. 
\begin{conjecture}
The difference of the multiplicities of $z_2$ in $a_n$ and $b_n$ is periodic, we have
 $m_{n}^{a}(z_{2})-m_{n}^{b}(z_{2})=\zeta_{n \mod 8}$, with  $\zeta=[0,1,0,-1,-1,2,-1,-1]$.
 \end{conjecture}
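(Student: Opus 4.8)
The plan is to establish the conjecture by induction on $n \pmod 8$, mirroring the proof strategy of Lemma 1 and using the fact that, although the ultra-discrete system does not pin down the absolute multiplicities $m^a_n(z_2)$ and $m^b_n(z_2)$ individually, the \emph{difference} should still be governed by the recurrence structure of (\ref{equ1}), (\ref{equ2}). The difficulty flagged in the text is precisely that the $\min$ in the ultra-discretisation of (\ref{equ1}) is not sharp for $z_2$; so the first task is to identify the correct defining recurrence for the true multiplicities. Writing $A_n := m^a_n(z_2)$ and $B_n := m^b_n(z_2)$, the homogenisation (\ref{equ2}) gives the clean relation $B_{n+3} = B_{n+1} + B_{n+2} + B_n + A_{n+1} + A_{n+2}$ exactly (no $\min$ appears, since (\ref{equ2}) is a pure product). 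The subtraction $A_n - B_n$ then only requires knowing $A_{n+3}$, and the hope is that, once the periodic factorisation pattern of (\ref{equ4}) is fed in, the troublesome $\min$ in $A_{n+3}$ always resolves to a definite branch, allowing a closed recurrence for $\delta_n := A_n - B_n$.

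First I would fix the exact contribution of $z_2$ to each of the four summands in (\ref{equ1}) and to the single product term in (\ref{equ2}), reading the multiplicities off (\ref{equ4}) and (\ref{equ5}) for $n>5$. Since the conjecture concerns only the eventual periodic regime, I would treat the first few indices ($n \le $ some small bound, up to $13$ or so where the iterated data are available) as verified base cases directly from the computed factorisations, and then prove the inductive step for generic large $n$. The inductive hypothesis would assert $\delta_k = \zeta_{k \bmod 8}$ for all $k < n$, which lets me substitute $A_k = B_k + \zeta_{k \bmod 8}$ throughout the right-hand sides. The key computation is then, for each residue class $n \equiv i \pmod 8$, to evaluate the $\min$ in the $A_n$-recurrence: under the hypothesis the two competing arguments of the $\min$ differ by an expression that is linear in the $\zeta_{k \bmod 8}$, and one checks by the explicit values $\zeta=[0,1,0,-1,-1,2,-1,-1]$ which argument is smaller in each class. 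Once the branch is determined, $A_n$ becomes an explicit linear combination of the $B_k$ plus a periodic constant, and subtracting $B_n$ (computed from the exact relation derived from (\ref{equ2})) yields $\delta_n = \zeta_{n \bmod 8}$.

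The main obstacle I anticipate is exactly the one the authors themselves hit: showing that the $\min$ in the ultra-discrete $a$-recurrence is attained by a single, predictable branch in each residue class, since the conjecture's very existence (rather than a theorem) signals that the naive ultra-discrete bound is not tight for $z_2$. Concretely, the danger is a residue class in which the two arguments of the $\min$ coincide or switch order, which would break the induction or force a correction term; the explicit appearance of the entry $2$ and the two $-1$'s in $\zeta$ suggests the branch genuinely changes within the period, so the case analysis must be done carefully class by class rather than uniformly. If for some class the ultra-discrete $\min$ were to \emph{over}count (predict a larger drop than the true multiplicity), one would need to supplement the argument with information extracted from the actual polynomial factorisations --- for instance showing that a would-be common factor of the four terms in (\ref{equ1}) fails to divide $z_2$ to the predicted order --- and this is where the prime-substitution data become indispensable as the only available handle. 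For this reason I would not expect a fully self-contained proof to be achievable purely from the ultra-discrete system, which is consistent with the statement being labelled a Conjecture rather than a Theorem.
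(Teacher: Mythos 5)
The statement you are trying to prove is presented in the paper as a \emph{Conjecture}, and the authors offer no proof at all: their only evidence is numerical, obtained by substituting prime numbers for the initial values so that the map (\ref{equ1}), (\ref{equ2}) can be iterated a few extra steps and the multiplicities of $z_2$ read off directly. Your proposal is therefore not ``a different route to the same proof''; it is an attempted proof of something the paper leaves open, and it contains a genuine gap at exactly the point you yourself flag. The inductive step hinges on the claim that, once the branch of the $\min$ in the ultra-discrete $a$-recurrence is identified, $A_n$ \emph{equals} that branch value. But the ultra-discretisation of (\ref{equ1}) only gives a lower bound: for a sum of four terms the multiplicity of $z_2$ is at least the minimum of the multiplicities of the summands, with strict inequality whenever the leading parts cancel. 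The authors state explicitly that this bound is \emph{not sharp} for $z_2$ here --- that is the entire reason the statement is a conjecture --- so in at least one residue class mod $8$ the true $A_n$ strictly exceeds every argument of the $\min$, and your closed recurrence for $\delta_n = A_n - B_n$ simply does not hold as an identity derived from the ultra-discrete system. (Your treatment of $B_n$ is fine, since (\ref{equ2}) is a pure product and gives $B_{n+3}=B_{n+1}+B_{n+2}+B_n+A_{n+1}+A_{n+2}$ exactly; also note the $\min$ in the $a$-recurrence has four arguments, not two.)

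To close the gap one would need, in the offending residue classes, both a divisibility statement ($z_2$ to the claimed power divides $a_n$, i.e.\ a cancellation among the four terms of (\ref{equ1}) that the tropicalisation cannot see) and a non-divisibility statement ($z_2$ to one higher power does not divide $a_n$). Neither can be extracted from the ultra-discrete system alone; both require working with the actual polynomials, e.g.\ reducing modulo $z_2$ and showing a specific combination is nonzero. Your final paragraph correctly diagnoses all of this and concedes that a self-contained proof is out of reach by this method --- which is an accurate assessment, and consistent with the paper, but it means the proposal should be read as a (correct) explanation of why the statement remains a conjecture rather than as a proof of it.
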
 
Assuming the conjecture, from (\ref{equ4},\ref{equ5}), it follows that
\[
\frac{ a_{n}}{c_{n-3}c_{n}g_{n}}= z_{2}^{max(0,\zeta_{n \mod 8})}
\quad \text{ and } \quad
 \frac{b_{n}}{c_{n-2}c_{n-1}g_{n}}=z_{2}^{max(0,-\zeta_{n \mod 8})}
\]
are polynomial sequences in $z_{2}$. Using these functions we find the following theorem
\begin{theorem}
The polynomials $z_{n>4}$, as defined by (\ref{equ1}), satisfy
\begin{equation} \label{uio}
\epsilon_{n}\frac{z_{n-3}z_{n+1}}{z_{n-1}^2}
+\epsilon_{n+1}\frac{z_{n-2}^{2}z_{n+1}^{2}}{z_{n-1}^2z_{n}^2}
+\epsilon_{n+2}\frac{z_{n-2}z_{n+2}}{z_{n}^2}+
\epsilon_{n+3}\frac{z_{n-2}z_{n+3}}{z_{n-1}z_{n+2}}
=\frac{\alpha}{\epsilon_{n+1}\epsilon_{n+2}}\frac{z_{n}z_{n+1}}{z_{n-1}z_{n+2}},
\end{equation}
with
$\epsilon=u_{2}^{\zeta_{n \mod 8}} $, $\{z_{n}=1\}_{n=-1}^2$, $z_{3}=u_{3}$ and $z_{4}=\alpha-u_{2}u_{3}(u_{1}+u_{2}+u_{3})$.
\end{theorem}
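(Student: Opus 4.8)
The plan is to mirror the proof strategy used for Theorems establishing (\ref{a3}) and (\ref{khe1}), since the structure of this statement is essentially identical: we have a homogenised system (\ref{equ1}), (\ref{equ2}), an explicit factorisation (\ref{equ4}), (\ref{equ5}) of $a_n$ and $b_n$ in terms of the divisor sequence $\{z_n\}$, and we want to eliminate the common factor $g_n$ to obtain a closed rational recurrence for the $z_n$. First I would verify the low-order cases directly. Using (\ref{equ1}), (\ref{equ2}) together with the initial values $\{a_i=u_i, b_i=1\}_{i=1}^3$ and the definitions (\ref{equ4}), (\ref{equ5}), I would compute $z_3=a_3=u_3$ and $z_4=a_4/(\text{appropriate } z\text{-powers})$, checking that $z_4=\alpha-u_2u_3(u_1+u_2+u_3)$ as claimed, and similarly confirm that these agree with what (\ref{uio}) produces under the stated initial data $\{z_n=1\}_{n=-1}^2$.

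For the general step I would, as in the earlier proofs, substitute the factored forms into the homogenised recurrence. Writing $a_n = \epsilon_n^{(a)} z_{n-3}z_n g_n$ and $b_n = \epsilon_n^{(b)} z_{n-2}z_{n-1} g_n$, where $\epsilon_n^{(a)}=z_2^{\max(0,\zeta_{n\bmod 8})}$ and $\epsilon_n^{(b)}=z_2^{\max(0,-\zeta_{n\bmod 8})}$ are the polynomial ratios identified just before the theorem (this is exactly where Conjecture~$\ref{...}$ on the periodicity of $\zeta$ enters), I would insert these expressions into (\ref{equ1}). This yields a single equation relating $z_{n+3}g_{n+3}$ to a combination of the four terms on the right, each carrying monomial prefactors in $z_2$ and a product $g_n g_{n+1}g_{n+2}$ of gcd's. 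I would then use (\ref{equ2}) to obtain a separate identity expressing $g_{n+3}$ (or the relevant ratio of $g$'s) purely in terms of lower $g$'s and $z$'s, exactly analogous to the step $g_n g_{n+1}^2 = g_{n+2}/(e_{n-1}^2 e_{n-2}^2)$ in the proof of Theorem for (\ref{a3}). Eliminating the $g$ factors between these two relations cancels the common divisor and leaves a recurrence purely in the $z_n$.

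The remaining work is bookkeeping of the $z_2$-exponents, which I expect to be the main obstacle. After eliminating $g$, each of the four terms in (\ref{uio}) must carry exactly the coefficient $\epsilon_{n+j}$, $j\in\{0,1,2,3\}$, or the combination $\alpha/(\epsilon_{n+1}\epsilon_{n+2})$ on the right, and this requires that certain linear combinations of the periodic sequence $\zeta_{k\bmod 8}=m_k^a(z_2)-m_k^b(z_2)$ collapse correctly. Concretely I would need to check, term by term, identities of the shape $\max(0,\zeta_k)-\max(0,-\zeta_k)=\zeta_k$ together with sums of the differences $m^a_n(z_2)-m^g_n(z_2)$ over the shifted indices appearing in each monomial; because the bound from ultra-discretisation is \emph{not} sharp for $z_2$ (as noted in the text), these exponent identities rest on Conjecture~$\ref{...}$ rather than on a proven recurrence, so the verification is genuinely conditional. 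I would organise this as eight separate checks, one for each residue $n\equiv i \pmod 8$, analogous to how the periodic Somos proofs in section~\ref{subs} repeat the iteration for each residue class; the pattern $\zeta=[0,1,0,-1,-1,2,-1,-1]$ should make each case a finite exponent computation. Finally, I would confirm that substituting the unit initial values $\{z_n=1\}_{n=-1}^2$ into (\ref{uio}) regenerates $z_3=u_3$, $z_4=\alpha-u_2u_3(u_1+u_2+u_3)$ and the subsequent terms, closing the argument.
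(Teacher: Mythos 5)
Your proposal follows essentially the same route the paper takes: the paper gives no explicit proof of this theorem, but derives it (conditionally on the Conjecture) by substituting $a_n=z_2^{\max(0,\zeta_n)}z_{n-3}z_ng_n$ and $b_n=z_2^{\max(0,-\zeta_n)}z_{n-2}z_{n-1}g_n$ into the homogenised system and eliminating the $g_n$'s, exactly as in the proofs of the earlier theorems you cite. Your plan, including the conditional dependence on the conjectured periodicity of $m_n^a(z_2)-m_n^b(z_2)$ and the residue-by-residue exponent checks, is a faithful and correctly flagged version of that argument.
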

The Laurentness of (\ref{uio}) can be verified as before, this time the
iteration has to be repeated for every congruence class $n$ mod 8.



\section*{Acknowledgments}
This research was supported by the Australian Research Council. Both authors acknowledge useful discussions with Reinout Quispel, and would like to thank
Ralph Willox for bringing to our attention reference \cite{Mas}.

\end{document}